\newtheorem{Alemm}{Lemma}
\begin{document}
\setlength{\baselineskip}{18.5pt}
\author{Toshihiro Hirano\footnote{Graduate School of Economics, The University of Tokyo, Hongo 7-3-1, 
Bunkyo-ku, Tokyo 113-0033, 
Japan, Currently at NEC Corporation, 1753, Shimonumabe, Nakahara-ku, Kawasaki, Kanagawa 211-8666, Japan. E-mail: nennekog@hotmail.co.jp}
}
\title{Pseudo best estimator by a separable approximation of spatial covariance structures}
\maketitle

\begin{abstract}

We consider a linear regression model with a spatially correlated error term on a lattice. When estimating  coefficients in the linear regression model, the generalized least squares estimator (GLSE) is used if the covariance structures are known. However, the GLSE for large spatial data sets is computationally expensive, because it involves  inverting the covariance matrix of error terms from each observations. To reduce the computational complexity, we propose a pseudo best estimator (PBE) using spatial covariance structures approximated by separable covariance functions. We derive the asymptotic covariance matrix of the PBE and compare it with those of the least squares estimator (LSE) and the GLSE through some simulations. 
%In other words, the effect of the misspecification of the covariance matrix for the GLSE is examined. 
Monte Carlo simulations demonstrate that the PBE using separable covariance functions has superior accuracy to that of the LSE, which does not contain the information of the spatial covariance structure, even if the true process has an isotropic Mat\'ern covariance function. Additionally, 
our proposed PBE is computationally efficient relative to the GLSE for large spatial data sets.

\end{abstract}

\textit{Key words} : Asymptotic covariance matrix, Generalized least squares, Lattice process, Pseudo best estimator, Separable process, Spatial statistics, Spectral density

\section{Introduction}
\label{intro}

Recently, various statistical methods for spatial data have been investigated. Among them, a linear regression model with a spatially correlated error term has played an important role in a wide variety of scientific fields such as geostatistics, econometrics and forestry.

To estimate the coefficients of the linear regression model with a spatially dependent error term, we often use the generalized least squares estimator (GLSE) (see, e.g., Cressie 1993). However, the GLSE for large spatial data sets is computationally expensive, because it involves inverting the covariance matrix of error terms at $n$ different spatial points, which requires $O(n^3)$ operations. For example, the Walker Lake dataset in Isaaks and Srivastava (1989) and the soil moisture index derived from other GIS layers in Huang et al. (2010) are observed on a spatial lattice of $260 \times 300$ and $100 \times 100$ regular grid points respectively. Recently, much attention has been paid to the analysis of such large spatial datasets (Furrer et al. 2006; Banerjee et al. 2008; Cressie and Johannesson 2008; Lindgren et al. 2011). 

To deal with this problem, we propose a pseudo best estimator (PBE) using separable covariance structures which  approximate the true ones. Specifically, we expressed the covariance function as the product of the covariance functions of the causal autoregressive process. The covariance matrix of the PBE has a separable structure and this  facilitates the computational procedure for large spatial datasets using the property of the Kronecker product. Moreover, unlike the PBE, the GLSE is infeasible because the true covariance matrix is unknown in practice, meaning  that the PBE is also a useful estimator for real spatial data. Genton (2007) considered separable approximation to predict large space-time datasets by solving the nearest Kronecker product for a space-time covariance matrix problem and obtained good predictive performance. 

The main contributions of this paper are to propose the PBE using separable covariance functions for efficient computation and derive its asymptotic covariance matrix by extending the technical method of Grenander and Rosenblatt (1957) and Anderson (1971) in time series to a lattice process. As this analysis provides the condition for the asymptotic efficiency of the least squares estimator (LSE) relative to the GLSE, it is easy in the simulation to compare the PBE with the LSE which is computationally efficient as with the PBE. Our work can be regarded as an extension of Amemiya (1973) and Engle (1974), who considered the asymptotic properties of the GLSE when the covariance structure of the true process was incorrectly identified in time series literature. Koreisha and Fang (2001) investigated the finite accuracy of the GLSE and the PBE for time series. 

In this paper, we compare the finite accuracy and asymptotic variance of the PBE with those of the LSE and the GLSE in Yajima and Matsuda (2008) through some simulations. In these simulations, the effect of the misspecification of the covariance function for the GLSE is also examined. The LSE is efficient in terms of calculating the estimator for large spatial datasets and Yajima and Matsuda (2008) obtained the necessary and sufficient conditions under which the asymptotic covariance matrix of the LSE is identical to that of the GLSE. However, our simulations illustrate that the PBE outperforms the LSE when these conditions are not satisfied and shows good performance as well as the LSE even if these conditions hold. In the simulations, the difference in mean squared error between the PBE approximated by separable covariance functions and the GLSE is small, even in the case of the true process with an isotropic Mat\'ern covariance function. Additionally, our proposed PBE is computationally efficient relative to the GLSE. 

% Our work can be regarded as an extension of Amemiya (1973) and Engle (1974) which considered the asymptotic properties of the GLSE when the covariance structure of the true process was incorrectly identified in time series literature. The GLSE is infeasible unlike the PBE because the true covariance matrix is unknown in practice. Our paper also examines the effect of the misspecification of the covariance function for the GLSE in spatial statistics. Koreisha and Fang (2001) investigated the finite accuracy of the GLSE and the PBE in time series case.

The remainder of this paper is organized as follows. We introduce a linear regression model with a spatially correlated error term and propose the PBE using separable covariance functions in Section 2. In Section 3, we review the necessary and sufficient conditions under which the asymptotic covariance matrix of the LSE is identical to that of the GLSE and derive the asymptotic covariance matrix of the PBE. In Section 4, computer experiments are conducted to examine the finite sample performance of our asymptotic result and compare the finite accuracy and asymptotic variance of the PBE with those of the LSE and GLSE. Our conclusions and future studies are discussed in Section 5. Technical proofs of the lemmas and theorem are given in Appendices A and B. 

\section{Linear regression model and some estimators}

For simplicity, we will consider the sampling region to be a square on a plane. Define $\mathbb{Z}=\{ 0, \pm 1, \pm 2, \ldots \}$. Let $\mathbb{Z}^2$ be the integer lattice points in the two-dimensional Euclidean space. For $\bm{t} = (t_1,t_2)^{'} (\in \mathbb{Z}^2)$, consider the regression model of the form
\[
y_{\bm{t}} = X_{\bm{t}}^{'} \bm{\beta} + \epsilon_{\bm{t}},
\]
where $\{ y_{\bm{t}} \}$ is an observed sequence, $X_{\bm{t}} = (x_{\bm{t},1}, \ldots , x_{\bm{t},p})^{'}$ is a $p$-vector of nonstochastic regressors, $\bm{\beta} = (\beta_1, \ldots, \beta_p)^{'}$ is a vector of unknown regression coefficients and the prime denotes the transposition. Hereafter, it is assumed that $(y_{\bm{t}},X_{\bm{t}})$ is observed on the square sampling domain $P_N =  \{ \bm{t}=(t_1, t_2)^{'} \in \mathbb{Z}^2 | 1 \le t_1 \le N, 1 \le t_2 \le N  \}$. In this case, the sample size is $N^2$. The error terms $\{ \epsilon_{\bm{t}} \}$ follow a stationary random field with mean 0 and spectral density function $f(\bm{\lambda}),\; \bm{\lambda} = (\lambda_1, \lambda_2)^{'} \in [-\pi, \pi]^2$. Thus, the covariance function $\gamma_{\epsilon}(\bm{h})$ of $\{ \epsilon_{\bm{t}} \}$ is given by 
\[
\gamma_{\epsilon}(\bm{h}) = E[\epsilon_{\bm{t}} \epsilon_{\bm{t}+\bm{h}}] = \int_{\Pi^2} \exp(i \bm{h}^{'} \bm{\lambda}) f(\bm{\lambda}) d \bm{\lambda},
\] 
where $\Pi = (-\pi, \pi]$, $\bm{h} = (h_1,h_2)^{'} (\in \mathbb{Z}^2)$ and $\bm{h}^{'} \bm{\lambda} = h_1 \lambda_1 + h_2 \lambda_2$.

When we estimate coefficients in the linear regression model, the GLSE is given by
\[
\hat{\bm{\beta}}_{GLSE} = \left( X^{'} \Sigma^{-1} X \right)^{-1} X^{'} \Sigma^{-1} \bm{y},
\]
where $X = \left( X_{(1,1)}, \ldots, X_{(1,N)},X_{(2,1)}, \ldots, X_{(2,N)}, \ldots, X_{(N,1)}, \ldots, X_{(N,N)} \right)^{'}$, \\
$\bm{y} = \left( y_{(1,1)}, \ldots, y_{(1,N)},y_{(2,1)}, \ldots, y_{(2,N)}, \ldots, y_{(N,1)}, \ldots, y_{(N,N)} \right)^{'}$, $\Sigma = E[\bm{\epsilon}\bm{\epsilon}^{'}]$ and\\
 $\bm{\epsilon} = \left( \epsilon_{(1,1)}, \ldots, \epsilon_{(1,N)},\epsilon_{(2,1)}, \ldots, \epsilon_{(2,N)}, \ldots, \epsilon_{(N,1)}, \ldots, \epsilon_{(N,N)} \right)^{'}$.
 
It is known that the GLSE is the best linear unbiased estimator (BLUE). However, if the true covariance structure is unknown, the GLSE is infeasible. Additionally, the operation count for computing $\Sigma^{-1} \; (N^2\times N^2)$ of $\hat{\bm{\beta}}_{GLSE} \left(= \left( X^{'} \Sigma^{-1} X \right)^{-1} X^{'} \Sigma^{-1} \bm{y} \right)$ is of order $N^6$. Hence, as the sample size increases, the computation becomes impractical. For example, the Walker Lake dataset in Isaaks and Srivastava (1989) consists of two variables measured at 78000 points on a spatial lattice of 260 $\times$ 300 regular grid points and the soil moisture index in Huang et al. (2010) is derived from GIS layers on a 100 $\times$ 100 grid. 

To reduce the computational burden, we consider the approximation of the true covariance function by the product of the covariance functions of the causal autoregressive process of order $P$ (AR($P$)) in time series literature. Thus,  we obtain the following estimator
\[
\hat{\bm{\beta}}_{PBE} = \left( X^{'} \tilde{\Sigma}^{-1} X \right)^{-1} X^{'} \tilde{\Sigma}^{-1} \bm{y},
\]
where $\tilde{\Sigma} = \tilde{\Sigma}_1 \bigotimes  \tilde{\Sigma}_2$ and $\tilde{\Sigma}_i$ is the covariance matrix of causal AR($P_i$) $(i = 1,2)$. This kind of estimator, in which the true covariance matrix is replaced by an  incorrect one, is called the pseudo best estimator. Each element of $\tilde{\Sigma}$ is denoted by the separable covariance function $\gamma(h_1,h_2) = \gamma_1(h_1) \gamma_2(h_2)$ where $\gamma_i(h_i)$ represents the autocovariance function of AR($P_i$)  $(i=1,2)$. The corresponding spectral density functions are denoted by $g(\lambda_1,\lambda_2)$, $g_1(\lambda_1)$ and $g_2(\lambda_2)$ respectively and it follows that $g(\lambda_1,\lambda_2) = g_1(\lambda_1) g_2(\lambda_2)$. $g(\lambda_1,\lambda_2)$ is a kind of an approximation of $f(\lambda_1,\lambda_2)$. 
%The random process with the covariance function expressed as the product of the autocovariance functions of AR(1) is often called the doubly geometric process (Martin 1979).
 From the property of the Kronecker product (Horn and Johnson 1991; page 244), $\tilde{\Sigma}^{-1} = \left( \tilde{\Sigma}_1 \bigotimes  \tilde{\Sigma}_2 \right)^{-1} = \tilde{\Sigma}_1^{-1} \bigotimes  \tilde{\Sigma}_2^{-1}$ and we can obtain the exact form of the inverse of the covariance matrix $\tilde{\Sigma}_i$ ($i=1,2$) given by the  autoregressive process (e.g., Anderson 1971; page 576). This makes it much faster to calculate $\hat{\bm{\beta}}_{PBE}$ by the separable approximation of the true covariance function.

The LSE is another alternative to the GLSE, because it does not require the inversion of the covariance matrix. The LSE is defined as
\[
\hat{\bm{\beta}}_{LSE} = \left( X^{'} X \right)^{-1} X^{'} \bm{y}.
\]
We will compare the accuracy of the GLSE, PBE and LSE by evaluating the asymptotic covariance matrix. 

\section{Asymptotic properties of $\hat{\bm{\beta}}_{GLSE}$, $\hat{\bm{\beta}}_{LSE}$ and $\hat{\bm{\beta}}_{PBE}$}

In this section, we derive the asymptotic covariance matrices of $\hat{\bm{\beta}}_{GLSE}$, $\hat{\bm{\beta}}_{LSE}$ and $\hat{\bm{\beta}}_{PBE}$. First, we introduce some assumptions.

Define
\begin{align*}
a_{ij}^{(N,N)}(h_1,h_2) &= \sum_{t_1=1}^{N-h_1}\sum_{t_2=1}^{N-h_2} x_{(t_1+h_1,t_2+h_2),i}x_{(t_1,t_2),j}, \quad h_1,h_2 \ge 0, \\
&= \sum_{t_1=1-h_1}^{N}\sum_{t_2=1-h_2}^{N} x_{(t_1+h_1,t_2+h_2),i}x_{(t_1,t_2),j}, \quad h_1,h_2 \le 0, \\
&= \sum_{t_1=1-h_1}^{N}\sum_{t_2=1}^{N-h_2} x_{(t_1+h_1,t_2+h_2),i}x_{(t_1,t_2),j}, \quad h_1 \le 0, h_2 \ge 0, \\
&= \sum_{t_1=1}^{N-h_1}\sum_{t_2=1-h_2}^{N} x_{(t_1+h_1,t_2+h_2),i}x_{(t_1,t_2),j}, \quad h_1 \ge 0,h_2 \le 0.
\end{align*}
(a) $a_{ii}^{(N,N)}(0,0) \to \infty$ as $N \to \infty, i = 1,\ldots,p$.\\
(b) $\lim_{N \to \infty} a_{ii}^{(N+h_1,N+h_2)}(0,0)/a_{ii}^{(N,N)}(0,0) = 1$ for every $i$ and $h_1,h_2$, $i= 1, \ldots,p$ and $h_1,h_2 \in \mathbb{Z}$.\\
(c) The limit of 
\[
\gamma_{ij}^{(N,N)}(h_1,h_2) = \frac{a_{ij}^{(N,N)}(h_1,h_2)}{\{a_{ii}^{(N,N)}(0,0)a_{jj}^{(N,N)}(0,0) \}^{1/2}}
\]
as $N \to \infty$ exists for every $i,j$ and $h_1,h_2$, ($i,j= 1, \ldots,p$ and $h_1,h_2 \in \mathbb{Z}$).

Let
\[
\rho_{ij}(h_1,h_2) = \lim_{N \to \infty} \gamma_{ij}^{(N,N)}(h_1,h_2)
\]
and $R(h_1,h_2)$ be the $p \times p$ matrix with $(i,j)$th element $\rho_{ij}(h_1,h_2)$.\\
(d) $R(0,0)$ is nonsingular.\\
(e) 
$\{ \epsilon_{\bm{t}} \}$ is a unilateral moving average process,
\[
\epsilon_{(t_1,t_2)}=\sum_{l=0}^{\infty}\sum_{m=0}^{\infty} \theta_{l,m} \eta_{(t_1-l,t_2-m)},
\]
where $\{ \theta_{l,m} \}$ satisfies $\sum_{l,m} | \theta_{l,m} |^2 < \infty$ with $\theta_{0,0}=1$ and $\{ \eta_{(t_1,t_2)} \}$ is white noise with $Var(\eta_{(t_1,t_2)}) = \sigma_{\eta}^2$.

Moreover if we define $\theta(z_1,z_2) = \sum_{l=0}^{\infty}\sum_{m=0}^{\infty} \theta_{l,m}z_1^{l}z_2^{m}$ and $\phi(z_1,z_2)=\theta(z_1,z_2)^{-1}=\sum_{l=0}^{\infty}\sum_{m=0}^{\infty} \phi_{l,m}z_1^{l}z_2^{m}$, then $\{ \phi_{l,m} \}$ satisfies $\sum_{l,m}| \phi_{l,m} | < \infty$ with $\phi_{0,0}=1$.\\
(f) $f(\bm{\lambda})$ is a positive continuous function in $[-\pi, \pi]^2$.\\
(g) $N \max_{1 \le t_1,t_2 \le N} (x_{(t_1,t_2),i})^2/a_{ii}^{(N,N)}(0,0) \to 0$ as $N \to \infty$ ($i = 1,\ldots,p$).\\
(h) $f(\lambda_1,\lambda_2)=f(-\lambda_1,\lambda_2)=f(\lambda_1,-\lambda_2)=f(-\lambda_1,-\lambda_2)$.

We make brief comments on the assumptions. We can view (a)-(d) as a two-dimensional version of so-called Grenander's conditions on $X_{\bm{t}}$ (Grenander and Rosenblatt 1957; Anderson 1971). Under (a)-(d), there exists a Hermitian matrix function $M(\lambda_1,\lambda_2)$ with positive semidefinite increments such that
\[
R(h_1,h_2)=\int_{\Pi^2} \exp(i(h_1 \lambda_1 + h_2 \lambda_2)) dM(\lambda_1, \lambda_2)
\]
(see Cohen and Francos 2002). Put
\begin{align*}
m_{kl}^{(N,N)}(\lambda_1, \lambda_2) = & \left( \sum_{t_1=1}^{N}\sum_{t_2=1}^{N} x_{(t_1,t_2),k} e^{-i(t_1 \lambda_1 + t_2 \lambda_2)} \right) \left( \sum_{t_1=1}^{N}\sum_{t_2=1}^{N} x_{(t_1,t_2),l} e^{i(t_1 \lambda_1 + t_2 \lambda_2)} \right) \\ 
&\Bigg/ \left( (2\pi)^2 \left(a_{kk}^{(N,N)}(0,0) a_{ll}^{(N,N)}(0,0) \right)^{1/2} \right)
\end{align*}
and define
\begin{align*}
M_{kl}^{(N,N)}(\lambda_1, \lambda_2) = \int_{-\pi}^{\lambda_1} \int_{-\pi}^{\lambda_2} m_{kl}^{(N,N)}(\omega_1, \omega_2) d\omega_1 d\omega_2. 
\end{align*}
Let $M_{kl}(\lambda_1, \lambda_2)$ and $M_{kl}^{(N,N)}(\lambda_1, \lambda_2)$ be $p \times p$ matrices with $(k,l)$th element of $M(\lambda_1, \lambda_2)$ and $M^{(N,N)}(\lambda_1, \lambda_2)$ respectively for $k,l = 1,\ldots,p$. If we regard $M^{(N,N)}(\lambda_1, \lambda_2)$ and $M(\lambda_1, \lambda_2)$ as matrix measures in $\Pi^2$, $R^{(N,N)}(h_1,h_2)=[\gamma_{kl}^{(N,N)}(h_1,h_2)]$ and $R(h_1,h_2)$ are their characteristic functions respectively. Then (c) implies
\[
M^{(N,N)}(\lambda_1, \lambda_2) \stackrel{w}{\to} M(\lambda_1, \lambda_2)
\]
as $N \to \infty$, where $\stackrel{w}{\to}$ means $M^{(N,N)}(\lambda_1, \lambda_2)$ converges weakly to $M(\lambda_1, \lambda_2)$ (see Ibragimov and Rozanov 1978). Consequently, for any continuous bounded function $\phi(\bm{\lambda})$ in $\Pi^2$,
\[
\lim_{N \to \infty} \int_{\Pi^2} \phi(\bm{\lambda}) dM^{(N,N)}(\lambda_1, \lambda_2) =  \int_{\Pi^2} \phi(\bm{\lambda}) dM(\lambda_1, \lambda_2).
\]
In accordance with the time series literature, $M(\lambda_1, \lambda_2)$ is called the regression spectral measure of $X_{\bm{t}}$ (Taniguchi et al. 2008). 

From Yajima and Matsuda (2008), under (e), $\{ \epsilon_{\bm{t}} \}$ has a unilateral AR representation,
\[
\sum_{l=0}^{\infty} \sum_{m=0}^{\infty} \phi_{l,m} \epsilon_{(t_1 -l, t_2 - m)} = \eta_{(t_1,t_2)},
\]
which has a quarter-plane dependence. Bronars and Jansen (1987) applied such a model to unemployment rate fluctuations in the United States. Moreover, (e) holds for a random process with separable covariance functions such as the product of one-dimensional Mat\'ern covariance functions, but a random process with an isotropic Mat\'ern covariance function does not satisfy (e) (Yajima and Matsuda 2008).
%(e) is somewhat restrictive, but (e) holds for a random process with the separable covariance functions such as the product of 1-dimensional Mat\'ern covariance functions (Yajima and Matsuda 2008).
The spectral density functions of separable covariance function and isotropic one satisfy (h), which is often referred to as axial symmetry.

Next we derive the asymptotic covariance matrix of the three estimators, that is $\hat{\bm{\beta}}_{GLSE}$, $\hat{\bm{\beta}}_{PBE}$ and $\hat{\bm{\beta}}_{LSE}$.

Define
\[
D_{N^2}=diag( \| \bm{x}_1 \|, \ldots, \| \bm{x}_p \| )
\]
where $\bm{x}_i = \left( x_{(1,1),i}, \ldots, x_{(1,N),i}, x_{(2,1),i}, \ldots, x_{(2,N),i}, \ldots, x_{(N,1),i}, \ldots,  x_{(N,N),i} \right)^{'}$ \\and $\| \bm{x}_i \| = (\sum_{t_1=1}^{N} \sum_{t_2=1}^{N} (x_{(t_1,t_2),i})^2)^{1/2} \; (i = 1,\ldots, p)$.
\par 
\bigskip
\noindent
\textbf{Theorem 1 (Yajima and Matsuda (2008))}.
Under (a)-(g),
\begin{align*}
&\lim_{N \to \infty} D_{N^2} E\left[ (\hat{\bm{\beta}}_{GLSE} - \bm{\beta}) (\hat{\bm{\beta}}_{GLSE} - \bm{\beta})^{'} \right] D_{N^2}\\
&= (2\pi)^2 \left( \int_{\Pi^2} \frac{1}{f(\lambda_1,\lambda_2)} dM(\lambda_1,\lambda_2)  \right)^{-1}.
\end{align*}
\par 
\bigskip
\noindent
\textbf{Theorem 2 (Yajima and Matsuda (2008))}.
Under (a)-(d) and (f),
\begin{align*}
&\lim_{N \to \infty} D_{N^2} E\left[ (\hat{\bm{\beta}}_{LSE} - \bm{\beta}) (\hat{\bm{\beta}}_{LSE} - \bm{\beta})^{'} \right] D_{N^2}\\
&= (2\pi)^2 R(0,0)^{-1} \int_{\Pi^2} f(\lambda_1,\lambda_2) dM(\lambda_1,\lambda_2) R(0,0)^{-1}.
\end{align*}

Moreover, Yajima and Matsuda (2008) gave the following necessary and sufficient conditions for the LSE to be asymptotically efficient relative to the GLSE, which means that the asymptotic covariance matrix of the LSE is identical to that of the GLSE. 
\par 
\bigskip
\noindent
\textbf{Theorem 3 (Yajima and Matsuda (2008))}.
Under (a)-(g), the LSE is asymptotically efficient relative to the GLSE if and only if $M(\lambda_1,\lambda_2)$ increases at not more than $p$ values of $\bm{\lambda}$ ($\in [0,\pi]^2$) 
and the sum of the ranks of the increases in $M(\lambda_1,\lambda_2)$ is $p$.
\par 
\bigskip
Theorems 1-3 can be regarded as an extension of Grenander and Rosenblatt (1957) for $d=1$ to spatial processes. The following theorem is our main result. The proof is given in Appendix B.
\par 
\bigskip
\noindent
\textbf{Theorem 4}.
Under (a)-(d) and (f)-(h),
\begin{align*}
&\lim_{N \to \infty} D_{N^2} E\left[ (\hat{\bm{\beta}}_{PBE} - \bm{\beta}) (\hat{\bm{\beta}}_{PBE} - \bm{\beta})^{'} \right] D_{N^2}\\
&= (2\pi)^2 \left( \int_{\Pi^2} \frac{1}{g(\lambda_1,\lambda_2)} dM(\lambda_1,\lambda_2)  \right)^{-1} \left( \int_{\Pi^2} \frac{f(\lambda_1,\lambda_2)}{g^2(\lambda_1,\lambda_2)} dM(\lambda_1,\lambda_2)  \right) \\
& \times \left( \int_{\Pi^2} \frac{1}{g(\lambda_1,\lambda_2)} dM(\lambda_1,\lambda_2)  \right)^{-1}.
\end{align*}

Note that the conditions required in Theorems 1-4 are not in common. In particular, unlike Theorems 2 and 4, Theorems 1 and 3 do not hold due to (e) if the random process has an isotropic Mat\'ern covariance function. That kind of separable covariance function satisfies (e), (f) and (h) of Theorems 1-4.
%Separable covariance functions satisfy all the conditions of Theorems 1-4.

Amemiya (1973) and Engle (1974) investigated the asymptotic properties of $\hat{\bm{\beta}}_{PBE}$ for the case $d=1$  and Theorem 4 is an extension of their theoretical results to the spatial case. Additionally, it can be regarded as an extension of Theorem 1 or 2 because the asymptotic covariance matrix in Theorem 4 is identical to that in Theorem 1 or 2 under the appropriate assumptions when $g(\lambda_1, \lambda_2) = f(\lambda_1, \lambda_2)$ or $g(\lambda_1, \lambda_2)$ is a constant.

\section{Computational experiments}

We conduct Monte Carlo simulations using MATLAB. The first experiment examines the convergence and finite sample accuracy of Theorem 4 for different sample sizes, separable approximations and true covariance functions.
%The finite accuracy and the asymptotic variance of the PBE are compared with those of the LSE through computational simulations especially in the case where the LSE is not asymptotically efficient. 
We consider linear regression models with one regressor
\[
y_{(t_1,t_2)} = \beta x_{(t_1,t_2)} + \epsilon_{(t_1,t_2)}
\]
for $1 \le t_1,t_2 \le N$ and $\beta = 2$. For the regressor $x_{(t_1,t_2)}$ satisfying (a)-(d) and (g), our computational experiments consider the polynomial trend $x_{(t_1,t_2)} = t_1 t_2$, the harmonic trend $x_{(t_1,t_2)} = \cos((\pi/2)t_1)\cos((\pi/2)t_2)$ and the polynomial plus harmonic trend $x_{(t_1,t_2)} = 1 + \cos((\pi/2)t_1)\cos((\pi/2)t_2)$ (see Toyooka 1985 for the case $d=1$). These regressors are useful for obtaining row and column effects and a kind of periodicity.
 The jumps of $M(\lambda_1,\lambda_2)$ are 1 at $(\lambda_1,\lambda_2)=(0,0)$ and 1/4 at $(\lambda_1,\lambda_2)=(\pi/2,\pi/2),(-\pi/2,\pi/2),(\pi/2,-\pi/2)$,\\$(-\pi/2,-\pi/2)$ for the polynomial and  harmonic trends respectively. For these two regressors, it follows from a routine calculation that the asymptotic variance of the PBE does not depend on separable approximations and the asymptotic variance of the PBE is identical to that of the LSE. As a result, the asymptotic variance of the PBE in each true case is the same for the first and second regressors. In the third regressor, the jumps of $M(\lambda_1,\lambda_2)$ are 4/5 at $(\lambda_1,\lambda_2)=(0,0)$ and 1/20 at $(\lambda_1,\lambda_2)=(\pi/2,\pi/2),(-\pi/2,\pi/2),(\pi/2,-\pi/2),(-\pi/2,-\pi/2)$. These jump values are used to  calculate the asymptotic variance of the GLSE, LSE and PBE. 

For the true covariance function $\gamma_{\epsilon}(\bm{h})$, we consider the following six models. Some models include the Mat\'ern covariance function which is popular in spatial statistics because of its flexibility with  spatial data. The first and second models are
\[
\gamma_{\epsilon}(\bm{h})=c(\|\bm{h}\|) \quad \|\bm{h}\| = \sqrt{h_1^2 + h_2^2},
\]
where 
\[
c(x) = \frac{\sigma^2}{2^{\nu-1}\Gamma(\nu)} \left( \frac{2 \sqrt{\nu}|x|}{\rho} \right)^{\nu} K_{\nu} \left( \frac{2 \sqrt{\nu}|x|}{\rho} \right) 
\]
and $K_{\nu}(\cdot)$ is the modified Bessel function of the second kind of order $\nu$. This is the isotropic Mat\'ern covariance function and we set $\nu = 2, \rho = 3, \sigma^2 = 1$ and $\nu = 1, \rho = 3, \sigma^2 = 1$ respectively. 
%These models are adopted to check the accuracy of the approximation by the separable covariance function. 
The third model is the product of one-dimensional Mat\'ern covariance functions with $\nu = 2, \rho = 3, \sigma^2 = 1$ and $\nu = 1, \rho = 3, \sigma^2 = 1$ respectively. The fourth model is the product of the one-dimensional Mat\'ern covariance function with $\nu = 1, \rho = 3, \sigma^2 = 1$ and the autocovariance function of AR($2$)
\[
c^{*}(x) = \frac{\sigma_{*}^2 \xi_1^2 \xi_2^2}{(\xi_1 \xi_2 -1)( \xi_2 - \xi_1)} \left[ \frac{\xi_1^{1-|x|}}{\xi_1^2 -1} - \frac{\xi_2^{1-|x|}}{\xi_2^2 -1} \right],
\] 
where $\xi_1 = (2/3)(1 + \sqrt{3}i), \xi_2 = (2/3)(1 - \sqrt{3}i)$ and $\sigma_{*}^2$ is chosen such that $c^{*}(0)=1$. It is inappropriate to approximate $c^{*}(x)$ by that of AR($1$) because a successive negative correlation exists. The fifth model is the product of the autocovariance function of AR($2$) in the fourth one and that of AR($1$)
\[
c^{**}(x) = \frac{\sigma_{**}^2}{1-\phi^2} \phi^{|x|},
\]
where $\phi = 0.5$ and $\sigma_{**}^2$ is chosen such that $c^{**}(0)=1$. The sixth model is the product of two autocovariance functions of AR($1$) with $\phi = 0.9$ and a scale parameter such that $\gamma_{\epsilon}(0)=1$. These separable covariance functions are used to check the convergence and finite sample accuracy of Theorem 4 and compare the asymptotic variance of the PBE with that of the GLSE. 
%The covariance functions except the first and second ones satisfy (e) (Yajima and Matsuda 2008) and 
Because all the models satisfy (f) and (h), Theorem 4 holds in these settings.

Next, we explain the method of the separable approximation to obtain the approximated covariance function $\gamma(h_1,h_2)$. This is similar to the Yule-Walker estimator in time series literature. Three types are adopted, namely AR($1$)$\times$AR($1$), AR($1$)$\times$AR($2$) and AR($2$)$\times$AR($2$). The first approximation is expressed by
\begin{align*}
\gamma(h_1,h_2) &= \gamma_1(h_1)\gamma_2(h_2) \\
&= \frac{\sigma_1^2}{1-\phi_1^2} \phi_1^{|h_1|}\frac{\sigma_2^2}{1-\phi_2^2} \phi_2^{|h_2|}.
\end{align*}
Each parameter is given by
\[
\hat{\phi}_1 = \frac{\hat{\gamma}(1,0)}{\hat{\gamma}(0,0)},\;\hat{\phi}_2 = \frac{\hat{\gamma}(0,1)}{\hat{\gamma}(0,0)} \;
 \mbox{and} \; \hat{\sigma}_{12}^2 = \frac{\hat{\gamma}(0,0)}{\gamma^{'} \left(0,0;\hat{\phi}_1,\hat{\phi}_2 \right)},
\]
where 
\begin{align*}
&\hat{\gamma}(\bm{h}) = \frac{1}{N(\bm{h})} \sum_{(i,j) \in S(\bm{h})} (\hat{\epsilon}_{\bm{t}_i} - \bar{\hat{\epsilon}}) (\hat{\epsilon}_{\bm{t}_j} - \bar{\hat{\epsilon}}), \\
&S(\bm{h}) = \{ (i,j) | \bm{h} = \bm{t}_i - \bm{t}_j, \bm{t}_i,\bm{t}_j \in P_N \}, \; N(\bm{h}) = \# \{ S(\bm{h}) \}, \\
&\hat{\epsilon}_{(t_1,t_2)} = y_{(t_1,t_2)} - \hat{\beta}_{LSE} x_{(t_1,t_2)}, \; \bar{\hat{\epsilon}} = \frac{1}{N^2} \sum_{t_1 = 1}^{N} \sum_{t_2 = 1}^{N} \hat{\epsilon}_{(t_1,t_2)} \\
\intertext{and}
& \gamma^{'} \left(0,0;\hat{\phi}_1,\hat{\phi}_2 \right) = \frac{1}{\left(1-\hat{\phi}_1^2 \right) \left(1-\hat{\phi}_2^2 \right)}.
\end{align*}
$\hat{\sigma}_{12}^2$ is an estimator of $\sigma_1^2 \sigma_2^2$ and is necessary for the calculation of the approximated spectral density function $g(\lambda_1,\lambda_2)$. For the second one, the covariance function used in the separable approximation is
\begin{align*}
&\gamma(h_1,h_2) = \gamma_1(h_1)\gamma_2(h_2) \\
& = \frac{\sigma_1^2}{1-\phi_1^2} \phi_1^{|h_1|} \frac{\sigma_2^2 \xi_1^2 \xi_2^2}{(\xi_1 \xi_2 -1)( \xi_2 - \xi_1)} \left[ \frac{\xi_1^{1-|h_2|}}{\xi_1^2 -1} - \frac{\xi_2^{1-|h_2|}}{\xi_2^2 -1} \right],  \\
&\hat{\xi}_1=\frac{\hat{a} + \sqrt{\hat{a}^2 + 4\hat{b}}}{-2\hat{b}}\; \mbox{and} \; \hat{\xi}_2=\frac{\hat{a} - \sqrt{\hat{a}^2 + 4\hat{b}}}{-2\hat{b}},
\end{align*}
where 
\begin{align*}
&
\begin{pmatrix}
  \hat{a}  \\
  \hat{b}
\end{pmatrix}
=
\begin{pmatrix}
  \hat{\rho}_2{(0)} & \hat{\rho}_2{(-1)}  \\
  \hat{\rho}_2{(1)} & \hat{\rho}_2{(0)}
\end{pmatrix}^{-1}
\begin{pmatrix}
  \hat{\rho}_2{(1)}  \\
  \hat{\rho}_2{(2)}
\end{pmatrix}, \\
&\hat{\rho}_2{(0)}=1,\; \hat{\rho}_2{(1)} = \frac{\hat{\gamma}(0,1)}{\hat{\gamma}(0,0)}=\hat{\rho}_2{(-1)}\; \mbox{and} \; \hat{\rho}_2{(2)} = \frac{\hat{\gamma}(0,2)}{\hat{\gamma}(0,0)}.
\end{align*}
$\phi_1$ and $\sigma_1^2 \sigma_2^2$ are given by a similar calculation as in the first approximation model. 
As before, the parameters are given for the case AR($2$)$\times$AR($2$).

We generate the error terms $\{ \epsilon_{(t_1,t_2)} \}_{ 1 \le t_1,t_2 \le N}$ by a multivariate normal distribution with mean $\bm{0}$ and covariance matrix $\Sigma$ and the sample size is $N^2 = 20^2$ or $60^2$. Next, $\hat{\beta}_{LSE}$ is calculated 1000 times for $N = 60$ and we obtain 1000 sets of parameters in $\gamma$ from $\hat{\epsilon}_{(t_1,t_2)} = y_{(t_1,t_2)} - \hat{\beta}_{LSE} x_{(t_1,t_2)}$ by the preceding separable approximation. Using $\gamma$ and $g(\lambda_1,\lambda_2)$ with the average parameter values, we calculate the empirical variance multiplied by $D_{N^2}^2$ from 1000 iterations, as well as the asymptotic variance, for $\hat{\beta}_{PBE}$ and $\hat{\beta}_{LSE}$. Note that the average parameter values calculated from $\hat{\beta}_{LSE}$ for $N = 60$ are used for the case $N = 20$, allowing us to use the fixed $g(\lambda_1,\lambda_2)$ in Theorem 4 throughout Section 4. This improves the accuracy of the PBE in the case $N = 20$, but does not change the conclusion of the computer experiments, that is, the PBE outperforms the LSE when the average parameter value is calculated from $\hat{\beta}_{LSE}$ for $N = 20$. These simulations are omitted from this paper. When calculating the spectral density function of the Mat\'ern covariance function, we must consider the aliasing phenomenon (Fuentes 2005) because the observations exist on an integer lattice.

\begin{table}[htbp]
\begin{center}
\caption{Summary of results from the first experiment for the polynomial trend.}
\label{sim1reg1}
\scalebox{0.75}{
\begin{tabular}{c|c|c|c|c|c|c} \hline \hline
case & \multicolumn{3}{c|}{$\nu$ = 2} & \multicolumn{3}{c}{$\nu$ = 1} \\\hline
approximation & AR(1) $\times$ AR(1) & AR(1) $\times$ AR(2) & AR(2) $\times$ AR(2) & AR(1) $\times$ AR(1) & AR(1) $\times$ AR(2) & AR(2) $\times$ AR(2) \\\hline
N=20 & 18.073 & 16.922 & 16.015 & 16.619 & 16.557 & 16.351 \\
N=60 & 24.288 & 22.934 & 22.085 & 23.315 & 23.283 & 23.334 \\\hline
asymptotic variance & 28.276 & 28.276 & 28.276 & 28.296 & 28.296 & 28.296 \\\hline
case & \multicolumn{3}{c|}{($\nu$ = 2) $\times$ ($\nu$ = 1)} & \multicolumn{3}{c}{($\nu$ = 1) $\times$ AR(2)} \\\hline
approximation & AR(1) $\times$ AR(1) & AR(1) $\times$ AR(2) & AR(2) $\times$ AR(2) & AR(1) $\times$ AR(1) & AR(1) $\times$ AR(2) & AR(2) $\times$ AR(2) \\\hline
N=20 & 16.421 & 16.027 & 14.180 & 3.998 & 3.309 & 3.236 \\
N=60 & 20.474 & 20.002 & 18.524 & 3.669 & 3.686 & 3.664 \\\hline
asymptotic variance & 23.624 & 23.624 & 23.624 & 3.766 & 3.766 & 3.766 \\\hline
case & \multicolumn{3}{c|}{AR(1) $\times$ AR(2)} & \multicolumn{3}{c}{AR(1) $\times$ AR(1)} \\\hline
approximation & AR(1) $\times$ AR(1) & AR(1) $\times$ AR(2) & AR(2) $\times$ AR(2) & AR(1) $\times$ AR(1) & AR(1) $\times$ AR(2) & AR(2) $\times$ AR(2) \\\hline
N=20 & 2.595 & 2.175 & 2.114 & 44.824 & 42.999 & 41.756 \\
N=60 & 2.437 & 2.257 & 2.318 & 151.399 & 154.121 & 159.830 \\\hline
asymptotic variance & 2.392 & 2.392 & 2.392 & 360.999 & 360.999 & 360.999 \\\hline
\end{tabular}
}
\end{center}
\end{table}

\begin{table}[htbp]
\begin{center}
\caption{Summary of results from the first experiment for the harmonic trend.}
\label{sim1reg2}
\scalebox{0.75}{
\begin{tabular}{c|c|c|c|c|c|c} \hline \hline
case & \multicolumn{3}{c|}{$\nu$ = 2} & \multicolumn{3}{c}{$\nu$ = 1} \\\hline
approximation & AR(1) $\times$ AR(1) & AR(1) $\times$ AR(2) & AR(2) $\times$ AR(2) & AR(1) $\times$ AR(1) & AR(1) $\times$ AR(2) & AR(2) $\times$ AR(2) \\\hline
N=20 & 0.101 & 0.101 & 0.102 & 0.240 & 0.242 & 0.244 \\
N=60 & 0.108 & 0.108 & 0.108 & 0.223 & 0.222 & 0.222 \\\hline
asymptotic variance & 0.103 & 0.103 & 0.103 & 0.222 & 0.222 & 0.222 \\\hline
case & \multicolumn{3}{c|}{($\nu$ = 2) $\times$ ($\nu$ = 1)} & \multicolumn{3}{c}{($\nu$ = 1) $\times$ AR(2)} \\\hline
approximation & AR(1) $\times$ AR(1) & AR(1) $\times$ AR(2) & AR(2) $\times$ AR(2) & AR(1) $\times$ AR(1) & AR(1) $\times$ AR(2) & AR(2) $\times$ AR(2) \\\hline
N=20 & 0.061 & 0.061 & 0.059 & 0.233 & 0.213 & 0.211 \\
N=60 & 0.057 & 0.057 & 0.057 & 0.205 & 0.199 & 0.199 \\\hline
asymptotic variance & 0.055 & 0.055 & 0.055 & 0.209 & 0.209 & 0.209 \\\hline
case & \multicolumn{3}{c|}{AR(1) $\times$ AR(2)} & \multicolumn{3}{c}{AR(1) $\times$ AR(1)} \\\hline
approximation & AR(1) $\times$ AR(1) & AR(1) $\times$ AR(2) & AR(2) $\times$ AR(2) & AR(1) $\times$ AR(1) & AR(1) $\times$ AR(2) & AR(2) $\times$ AR(2) \\\hline
N=20 & 0.449 & 0.468  & 0.449  & 0.012 & 0.012 & 0.012 \\
N=60 & 0.422 & 0.438 & 0.40  & 0.011 & 0.012 & 0.011 \\\hline
asymptotic variance & 0.419 & 0.419 & 0.419 & 0.011 & 0.011 & 0.011 \\\hline
\end{tabular}
}
\end{center}
\end{table}

\begin{table}[htbp]
\begin{center}
\caption{Summary of results from the first experiment for the polynomial plus harmonic trend.}
\label{sim1reg3}
\scalebox{0.75}{
\begin{tabular}{c|c|c|c|c|c|c} \hline \hline
case & \multicolumn{3}{c|}{$\nu$ = 2} & \multicolumn{3}{c}{$\nu$ = 1} \\\hline
approximation & AR(1) $\times$ AR(1) & AR(1) $\times$ AR(2) & AR(2) $\times$ AR(2) & AR(1) $\times$ AR(1) & AR(1) $\times$ AR(2) & AR(2) $\times$ AR(2) \\\hline
N=20 & 0.538 & 0.542 & 0.547 & 1.011 & 1.005 & 1.003 \\
N=60 & 0.497 & 0.498 & 0.498 & 1.069 & 1.066 & 1.063 \\\hline
asymptotic variance & 0.516 & 0.515 & 0.513 & 1.097 & 1.095 & 1.088 \\\hline
case & \multicolumn{3}{c|}{($\nu$ = 2) $\times$ ($\nu$ = 1)} & \multicolumn{3}{c}{($\nu$ = 1) $\times$ AR(2)} \\\hline
approximation & AR(1) $\times$ AR(1) & AR(1) $\times$ AR(2) & AR(2) $\times$ AR(2) & AR(1) $\times$ AR(1) & AR(1) $\times$ AR(2) & AR(2) $\times$ AR(2) \\\hline
N=20 & 0.287 & 0.287 & 0.292 & 1.124 & 0.903 & 0.896 \\
N=60 & 0.287 & 0.286 & 0.284 & 0.985 & 0.849 & 0.832 \\\hline
asymptotic variance & 0.271 & 0.271 & 0.270  & 0.980  & 0.856 & 0.856 \\\hline
case & \multicolumn{3}{c|}{AR(1) $\times$ AR(2)} & \multicolumn{3}{c}{AR(1) $\times$ AR(1)} \\\hline
approximation & AR(1) $\times$ AR(1) & AR(1) $\times$ AR(2) & AR(2) $\times$ AR(2) & AR(1) $\times$ AR(1) & AR(1) $\times$ AR(2) & AR(2) $\times$ AR(2) \\\hline
N=20 & 1.630 & 1.108  & 1.288  & 0.057  & 0.058 & 0.062 \\
N=60 & 1.569 & 1.198 & 1.263  & 0.054 & 0.058 & 0.055 \\\hline
asymptotic variance & 1.580  & 1.231 & 1.231 & 0.055 & 0.055 & 0.055 \\\hline
\end{tabular}
}
\end{center}
\end{table}

Tables~\ref{sim1reg1} - \ref{sim1reg3} show that as $N$ increases, the empirical variance multiplied by $D_{N^2}^2$ goes to the asymptotic variance in all cases, as in Theorem 4. However, in some cases with the polynomial trend regressor, the sample size seems to be insufficient. 

Moreover, we can plot the approximated spectral density function $g(\lambda_1,\lambda_2)$ with the average parameter  values given in each case for the polynomial plus harmonic trend to check the fit of $g$ to the true spectral density function $f(\lambda_1,\lambda_2)$. Figures 1-3 show $f$ and $g$ on $[0, \pi]^2$ because $f$ and $g$ are axial symmetry. As the order of the autoregressive process increases, the fit of the spectral density function $g$ to $f$ is better at the points $(\lambda_1,\lambda_2)$ where $M(\lambda_1,\lambda_2)$ has jumps.
%In the case of the separable approximation AR(2) $\times$ AR(2), except the true model of AR(1) $\times$ AR(1), the fitting of the spectral density function $g$ for $f$ is very good at the points $(\lambda_1,\lambda_2)$ where $M(\lambda_1,\lambda_2)$ has jumps.
 This is consistent with the accuracy of the PBE in Table~\ref{sim1reg3}. 
%which shows the good performance of the separable approximation AR(2) $\times$ AR(2). 

\begin{figure}[H]
\begin{center}
\scalebox{0.32}{\includegraphics{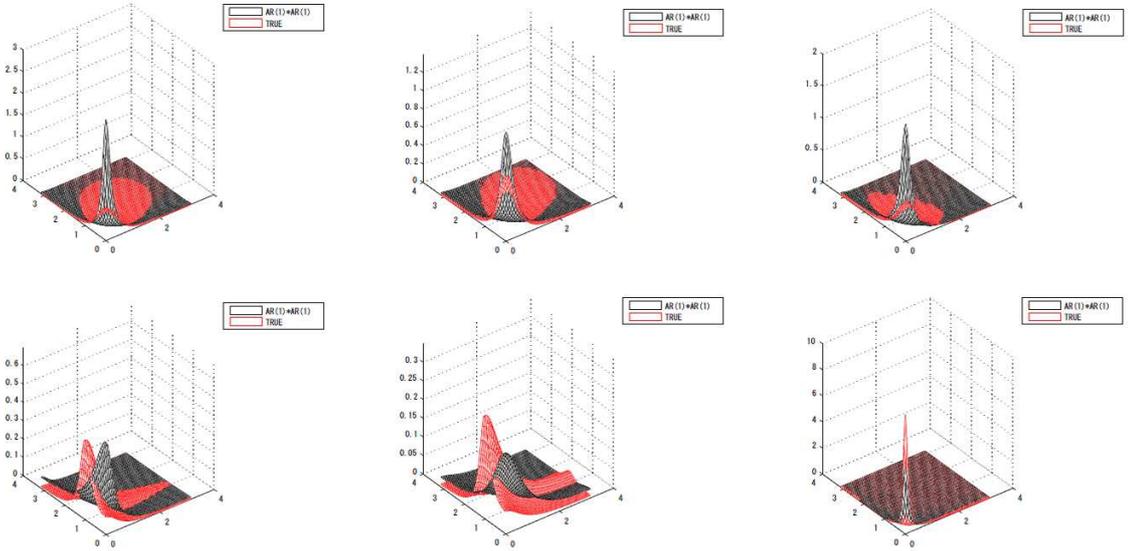}}
%   \scalebox{0.34}{\includegraphics{cor1ar11.eps}}
%   \scalebox{0.34}{\includegraphics{cor2ar11.eps}}
%   \scalebox{0.34}{\includegraphics{cor3ar11.eps}}
%   \scalebox{0.34}{\includegraphics{cor4ar11.eps}}
%   \scalebox{0.34}{\includegraphics{cor5ar11.eps}}
%   \scalebox{0.34}{\includegraphics{cor6ar11.eps}}
  \caption{Approximated spectral density functions $g(\lambda_1,\lambda_2)$ on $[0,\pi]^2$ in the approximation of AR(1)$\times$AR(1). The true covariance function corresponds to the isotropic Mat\'ern ($\nu$ = 2), the isotropic Mat\'ern ($\nu$ = 1) and ($\nu$ = 2)$\times$($\nu$ = 1) for the left, middle and right column in the top row respectively. Similarly, it corresponds to ($\nu$ = 1)$\times$AR(2), AR(1)$\times$AR(2) and AR(1)$\times$AR(1) for the left, middle and right column in the bottom row respectively.}
\end{center}
\end{figure}

\begin{figure}[h]
\begin{center}
\scalebox{0.32}{\includegraphics{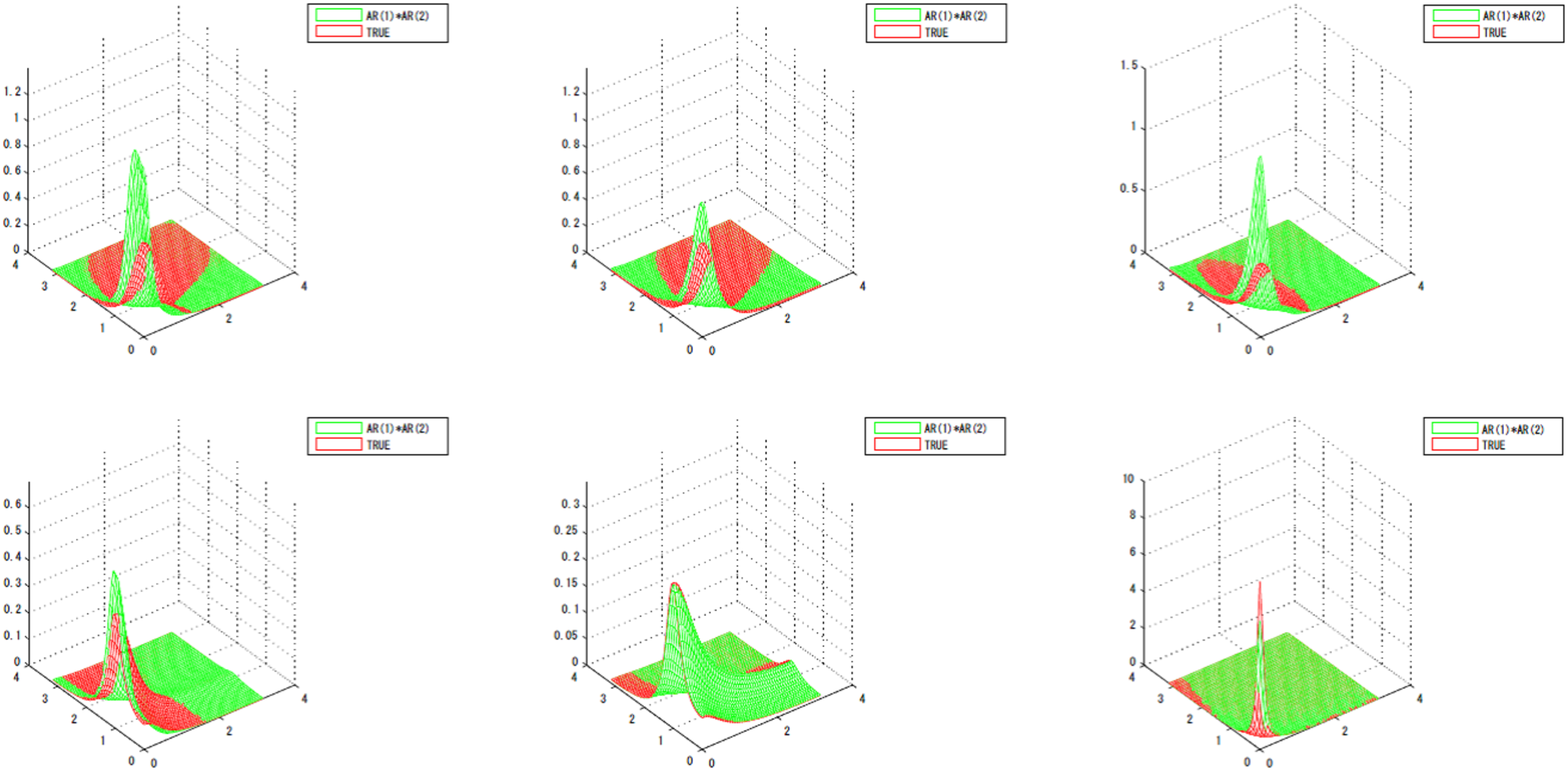}}
%   \scalebox{0.34}{\includegraphics{cor1ar12.eps}}
%   \scalebox{0.34}{\includegraphics{cor2ar12.eps}}
%   \scalebox{0.34}{\includegraphics{cor3ar12.eps}}
%   \scalebox{0.34}{\includegraphics{cor4ar12.eps}}
%   \scalebox{0.34}{\includegraphics{cor5ar12.eps}}
%   \scalebox{0.34}{\includegraphics{cor6ar12.eps}}
  \caption{Approximated spectral density functions $g(\lambda_1,\lambda_2)$ on $[0,\pi]^2$ in the approximation case of AR(1)$\times$AR(2). The true covariance function corresponds to the isotropic Mat\'ern ($\nu$ = 2), the isotropic Mat\'ern ($\nu$ = 1) and ($\nu$ = 2)$\times$($\nu$ = 1) for the left, middle and right column in the top row respectively. Similarly, it corresponds to ($\nu$ = 1)$\times$AR(2), AR(1)$\times$AR(2) and AR(1)$\times$AR(1) for the left, middle and right column in the bottom row respectively.}
\end{center}
\end{figure}

\begin{figure}[h]
\begin{center}
\scalebox{0.32}{\includegraphics{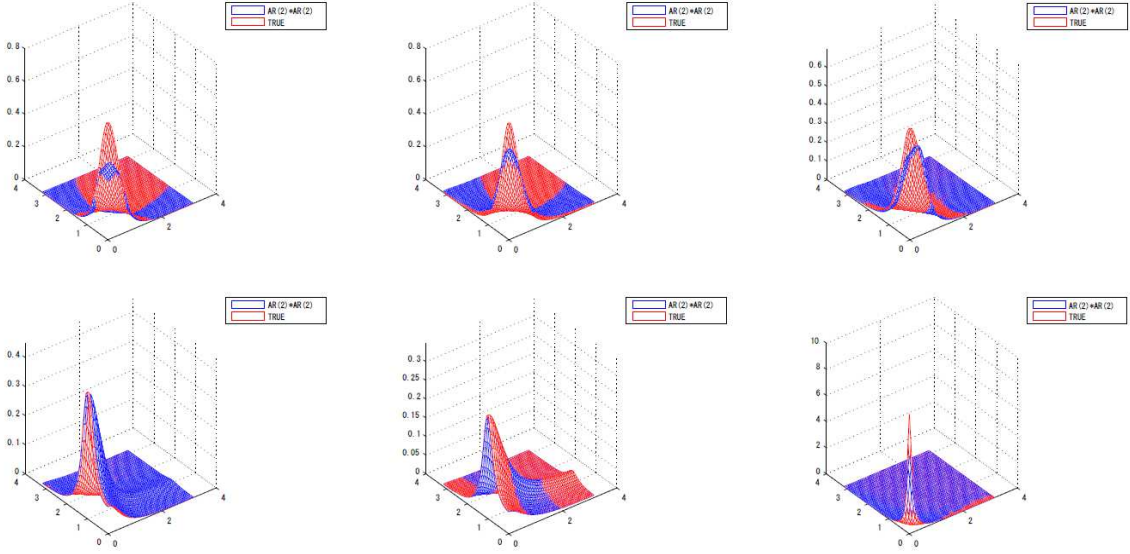}}
%   \scalebox{0.34}{\includegraphics{cor1ar22.eps}}
%   \scalebox{0.34}{\includegraphics{cor2ar22.eps}}
%   \scalebox{0.34}{\includegraphics{cor3ar22.eps}}
%   \scalebox{0.34}{\includegraphics{cor4ar22.eps}}
%   \scalebox{0.34}{\includegraphics{cor5ar22.eps}}
%   \scalebox{0.34}{\includegraphics{cor6ar22.eps}}
  \caption{Approximated spectral density functions $g(\lambda_1,\lambda_2)$ on $[0,\pi]^2$ in the approximation case of AR(2)$\times$AR(2). The true covariance function corresponds to the isotropic Mat\'ern ($\nu$ = 2), the isotropic Mat\'ern ($\nu$ = 1) and ($\nu$ = 2)$\times$($\nu$ = 1) for the left, middle and right column in the top row respectively. Similarly, it corresponds to ($\nu$ = 1)$\times$AR(2), AR(1)$\times$AR(2) and AR(1)$\times$AR(1) for the left, middle and right column in the bottom row respectively.}
\end{center}
\end{figure}

The second experiment compares the finite accuracy and asymptotic variance of the PBE with those of the LSE and the GLSE under the same settings as the first experiment. To see the finite accuracy and asymptotic variance of the three estimators, we calculate the ratios of the empirical variance of the PBE and LSE to that of the GLSE, as well as the theoretical ratios given by the asymptotic variances in Theorems 1, 2 and 4. The empirical variances of the GLSE, LSE and PBE and $g(\lambda_1,\lambda_2)$ are as in the first experiment. The random processes with true covariance functions except the isotropic Mat\'ern covariance function and $(\nu = 1) \times AR(2)$ satisfy (e) and all the models satisfy (f) and (h). Therefore, with the isotropic Mat\'ern covariance function and $(\nu = 1) \times AR(2)$ case, Theorems 1 and 3 do not hold for each regressor. Moreover, for the first and second regressors, it follows from Theorem 3 and a routine calculation that the LSE and PBE are asymptotically efficient except with the isotropic Mat\'ern covariance function and $(\nu = 1) \times AR(2)$ case. Therefore, we show only the empirical ratios in tables for the first and second regressors and the isotropic Mat\'ern covariance function and $(\nu = 1) \times AR(2)$ case for the third regressor. Similarly, from Theorem 3, the LSE is not asymptotically efficient for the third regressor except with the isotropic Mat\'ern covariance function and $(\nu = 1) \times AR(2)$, so that the asymptotic variance of the LSE is different from that of the GLSE. The asymptotic efficiency of the LSE in the isotropic Mat\'ern cases and $(\nu = 1) \times AR(2)$ case is a subject for future work. In subsequent tables, we omit the empirical or asymptotic variances of the LSE, except for the separable approximation AR(1)$\times$AR(1) because they do not depend on the three separable approximations.

\begin{table}
% table caption is above the table
\caption{Summary of results from the second experiment for the polynomial trend.}
\label{sim2reg1}       % Give a unique label
% For LaTeX tables use
\scalebox{0.96}{
\begin{tabular}{cc|c|ccccc|ccccc|ccccc} \hline \hline
 &                                         & case & \multicolumn{5}{c|}{Mat\'ern ($\nu$ = 2)}     &  \multicolumn{5}{c|}{Mat\'ern ($\nu$ = 1)} & \multicolumn{5}{c}{($\nu$ = 2)$\times$($\nu$ = 1)}\\ \hline
 \multicolumn{2}{c|}{approximation}        &  N   & &   LSE  &               &   PBE  &             & &   LSE  &               &  PBE   &          & &   LSE  &           &  PBE    &                       \\ \hline
 \multicolumn{2}{c|}{AR(1) $\times$ AR(1)} &  20  & & 1.328  &               & 1.161  &             & & 1.214  &               & 1.039  &          & & 1.187  &           & 1.159  &                       \\
 &                                         &  60  & & 1.108  &               & 1.116  &             & & 1.137  &               & 1.022  &          & & 1.066  &           & 1.107  &                       \\ \hline
 \multicolumn{2}{c|}{AR(1) $\times$ AR(2)} &  20  & &   -    &               & 1.087  &             & &    -   &               & 1.047  &          & &    -   &           & 1.131  &                       \\
 &                                         &  60  & &   -    &               & 1.054  &             & &    -   &               & 1.021  &          & &    -   &           & 1.081  &                       \\ \hline
 \multicolumn{2}{c|}{AR(2) $\times$ AR(2)} &  20  & &   -    &               & 1.029  &             & &    -   &               & 1.035  &          & &    -   &           & 1.001  &                       \\
 &                                         &  60  & &   -    &               & 1.015  &             & &    -   &               & 1.023  &          & &    -   &           & 1.001  &                       \\ \hline
 &                                         & case & \multicolumn{5}{c|}{($\nu$ = 1)$\times$AR(2)} &  \multicolumn{5}{c|}{AR(1)$\times$AR(2)}     & \multicolumn{5}{c}{AR(1)$\times$AR(1)}       \\ \hline
 \multicolumn{2}{c|}{approximation}        &  N   & &   LSE  &               &   PBE  &             & &   LSE  &               &   PBE  &          & &   LSE  &      &   PBE     &                       \\ \hline
 \multicolumn{2}{c|}{AR(1) $\times$ AR(1)} &  20  & & 1.145  &               & 1.204  &             & & 1.066  &               & 1.172  &          & & 1.929  &           & 1.013   &                       \\
 &                                         &  60  & & 1.073  &               & 1.099  &             & & 1.053  &               & 1.090  &          & & 1.379  &           & 1.002   &                       \\ \hline
 \multicolumn{2}{c|}{AR(1) $\times$ AR(2)} &  20  & &   -    &               & 1.023  &             & &    -   &               & 1.0  &          & &    -   &           &  1.015   &                       \\
 &                                         &  60  & &   -    &               & 1.007  &             & &    -   &               & 1.0   &          & &    -   &           &  1.005   &                       \\ \hline
 \multicolumn{2}{c|}{AR(2) $\times$ AR(2)} &  20  & &   -    &               & 1.001  &             & &    -   &               & 1.004  &          & &    -   &           & 1.012   &                       \\
 &                                         &  60  & &   -    &               & 1.001  &             & &    -   &               & 1.0    &          & &    -   &           & 1.007   &                       \\ \hline
\end{tabular}
}
\end{table}

\begin{table}
% table caption is above the table
\caption{Summary of results from the second experiment for the harmonic trend.}
\label{sim2reg2}       % Give a unique label
% For LaTeX tables use
\scalebox{0.95}{
\begin{tabular}{cc|c|ccccc|ccccc|ccccc} \hline \hline
 &                                         & case & \multicolumn{5}{c|}{Mat\'ern ($\nu$ = 2)}     &  \multicolumn{5}{c|}{Mat\'ern ($\nu$ = 1)} & \multicolumn{5}{c}{($\nu$ = 2)$\times$($\nu$ = 1)}\\ \hline
 \multicolumn{2}{c|}{approximation}        &  N   & &   LSE  &               &   PBE  &             & &   LSE  &               &  PBE   &          & &   LSE  &           &  PBE    &                       \\ \hline
 \multicolumn{2}{c|}{AR(1) $\times$ AR(1)} &  20  & & 1.178  &               & 1.034  &             & & 1.040  &               & 1.028  &          & & 1.483  &           & 1.044  &                       \\
 &                                         &  60  & & 1.064  &               & 1.008  &             & & 1.032  &               & 1.003  &          & & 1.151  &           & 1.016  &                       \\ \hline
 \multicolumn{2}{c|}{AR(1) $\times$ AR(2)} &  20  & &   -    &               & 1.034  &             & &    -   &               & 1.038  &          & &    -   &           & 1.036  &                       \\
 &                                         &  60  & &   -    &               & 1.012  &             & &    -   &               & 1.001  &          & &    -   &           & 1.017  &                       \\ \hline
 \multicolumn{2}{c|}{AR(2) $\times$ AR(2)} &  20  & &   -    &               & 1.042  &             & &    -   &               & 1.047  &          & &    -   &           & 1.011  &                       \\
 &                                         &  60  & &   -    &               & 1.011  &             & &    -   &               & 1.004  &          & &    -   &           & 1.004  &                       \\ \hline
 &                                         & case & \multicolumn{5}{c|}{($\nu$ = 1)$\times$AR(2)} &  \multicolumn{5}{c|}{AR(1)$\times$AR(2)}     & \multicolumn{5}{c}{AR(1)$\times$AR(1)}       \\ \hline
 \multicolumn{2}{c|}{approximation}        &  N   & &   LSE  &               &   PBE  &             & &   LSE  &               &   PBE  &          & &   LSE  &      &   PBE     &                       \\ \hline
 \multicolumn{2}{c|}{AR(1) $\times$ AR(1)} &  20  & & 1.343  &               & 1.103  &             & & 1.203  &               & 1.059  &          & & 1.770  &           & 1.0   &                       \\
 &                                         &  60  & & 1.119  &               & 1.033  &             & & 1.075  &               & 1.032  &          & & 1.271  &           & 1.0   &                       \\ \hline
 \multicolumn{2}{c|}{AR(1) $\times$ AR(2)} &  20  & &   -    &               & 1.008  &             & &    -   &               & 1.0  &          & &    -   &           &  1.0   &                       \\
 &                                         &  60  & &   -    &               & 1.003  &             & &    -   &               & 1.0    &          & &    -   &           &  1.0   &                       \\ \hline
 \multicolumn{2}{c|}{AR(2) $\times$ AR(2)} &  20  & &   -    &               & 1.001  &             & &    -   &               & 1.0    &          & &    -   &           & 1.007   &                       \\
 &                                         &  60  & &   -    &               & 1.0    &             & &    -   &               & 1.0    &          & &    -   &           & 1.005   &                       \\ \hline
\end{tabular}
}
\end{table}

Tables~\ref{sim2reg1} and \ref{sim2reg2} summarize the results for the first and second regressors. As $N$ increases, the finite efficiency goes to 1 based on theoretical results except with the isotropic Mat\'ern covariance function  and $(\nu = 1) \times AR(2)$ case. In the isotropic and $(\nu = 1) \times AR(2)$ cases of this simulation, the finite efficiency is close to 1. When AR(1)$\times$AR(2) and AR(1)$\times$AR(1) are the true covariance functions, the PBE of the corresponding approximation is often superior to those in other cases. Moreover, although the asymptotic variances of the LSE and PBE are equal to each other, the PBE outperforms the LSE in many of the $N = 60$ cases in terms of the ratio of the empirical variances. This is because the PBE includes information about the approximated spatial correlation structure. Throughout these simulations, as the order of the autoregressive process in the separable approximation is larger, the efficiency tends to be often better.

\begin{table}
% table caption is above the table
\caption{Summary of results from the second experiment for the polynomial plus harmonic trend.}
\label{sim2reg3}       % Give a unique label
% For LaTeX tables use
\scalebox{0.8}{
\begin{tabular}{cc|c|ccccc|ccccc|ccccc} \hline
 &                                         & case & \multicolumn{5}{c|}{Mat\'ern ($\nu$ = 2)}     &  \multicolumn{5}{c|}{Mat\'ern ($\nu$ = 1)} & \multicolumn{5}{c}{($\nu$ = 2)$\times$($\nu$ = 1)}\\ \hline
 \multicolumn{2}{c|}{approximation}        &  N   & &   LSE  &               &   PBE  &             & &   LSE  &               &  PBE   &          & &   LSE  &           &  PBE    &                       \\ \hline
 \multicolumn{2}{c|}{AR(1) $\times$ AR(1)} &  20  & & 37.347 &               & 1.045  &             & & 18.060 &               & 1.049  &          & & 54.357 &           & 1.044  &                       \\
 &                                         &  60  & & 42.525 &               & 1.008  &             & & 20.083 &               & 1.007  &          & & 65.797 &           & 1.018  &                       \\ 
 \multicolumn{2}{c|}{} &    & & - &               & -  &             & & - &               & -  &          & & (70.077) &           & (1.004)  &                       \\ \hline
 \multicolumn{2}{c|}{AR(1) $\times$ AR(2)} &  20  & &   -    &               & 1.054  &             & &    -   &               & 1.043  &          & &    -   &           & 1.044  &                       \\
 &                                         &  60  & &   -    &               & 1.009  &             & &    -   &               & 1.005  &          & &    -   &           & 1.014  &                       \\
  \multicolumn{2}{c|}{ } &    & & - &               & -  &             & & - &               & -  &          & & - &           & (1.003)  &                       \\ \hline
 \multicolumn{2}{c|}{AR(2) $\times$ AR(2)} &  20  & &   -    &               & 1.064  &             & &    -   &               & 1.041  &          & &    -   &           & 1.015  &                       \\
 &                                         &  60  & &   -    &               & 1.010  &             & &    -   &               & 1.002  &          & &    -   &           & 1.006  &                       \\ 
   \multicolumn{2}{c|}{ } &    & & - &               & -  &             & & - &               & -  &          & & - &           & (1.002)  &                       \\ \hline
 &                                         & case & \multicolumn{5}{c|}{($\nu$ = 1)$\times$AR(2)} &  \multicolumn{5}{c|}{AR(1)$\times$AR(2)}     & \multicolumn{5}{c}{AR(1)$\times$AR(1)}       \\ \hline
 \multicolumn{2}{c|}{approximation}        &  N   & &   LSE  &               &   PBE  &             & &   LSE  &               &   PBE  &          & &   LSE  &      &   PBE     &                       \\ \hline
 \multicolumn{2}{c|}{AR(1) $\times$ AR(1)} &  20  & & 3.263  &               & 1.259  &             & & 1.661  &               & 1.453  &          & & 1691   &           & 1.003  &                       \\
 &                                         &  60  & & 3.652  &               & 1.191  &             & & 1.660  &               & 1.328  &          & & 3732   &           & 1.0   &                       \\ 
    \multicolumn{2}{c|}{ } &    & & - &               & -  &             & & (1.622) &               & (1.283)  &          & & (5242) &           & (1.0)  &                       \\ \hline
 \multicolumn{2}{c|}{AR(1) $\times$ AR(2)} &  20  & &   -    &               & 1.037  &             & &    -   &               & 1.0  &          & &    -   &           &  1.007   &                       \\
 &                                         &  60  & &   -    &               & 1.026  &             & &    -   &               & 1.0  &          & &    -   &           &  1.0   &                       \\ 
  \multicolumn{2}{c|}{ } &    & & - &               & -  &             & & - &               & (1.0)  &          & & - &           & (1.0)  &                       \\ \hline
 \multicolumn{2}{c|}{AR(2) $\times$ AR(2)} &  20  & &   -    &               & 1.006  &             & &    -   &               & 1.002  &          & &    -   &           & 1.004 &                       \\
 &                                         &  60  & &   -    &               & 1.005  &             & &    -   &               & 1.002  &          & &    -   &           & 1.003   &                       \\ 
  \multicolumn{2}{c|}{ } &    & & - &               & -  &             & & - &               & (1.002)  &          & & - &           & (1.001)  &                       \\\hline
\end{tabular}
}
\end{table}

Table~\ref{sim2reg3} shows the results for the polynomial plus harmonic trend. $(\cdot)$ denotes the theoretical ratio of the asymptotic variances of the LSE and PBE in Theorems 2 and 4 to that of the GLSE in Theorem 1. As $N$ increases, the finite efficiency goes to the value of $(\cdot)$ except in the isotropic Mat\'ern cases and $(\nu = 1) \times AR(2)$ case. Because the LSE is not asymptotically efficient for the third regressor, its performance is poor in this case. However, the efficiency of the PBE is close to 1 in both the empirical and theoretical ratios. In particular, the PBE shows good performance even in the cases of the isotropic Mat\'ern class. In fact, the asymptotic variances of the LSE in Table~\ref{sim2reg3}, which are 22.642 and 22.681 with $\nu = 2$ and $\nu = 1$ respectively, are much larger than those of the PBE in Table~\ref{sim1reg3}. Other properties are similar to those in the preceding simulations.   

Finally we examine the computational time of $\hat{\beta}_{GLSE}$, $\hat{\beta}_{LSE}$ and $\hat{\beta}_{PBE}$. We set $N=100$ and adopt the isotropic Mat\'ern covariance function with $\nu = 2, \rho = 3, \sigma^2 = 1$ for the polynomial plus harmonic trend. All computations are carried out on \rm{Linux  powered 3.33GHz Xeon processor with 8 Gbytes RAM}. From Table~\ref{sim3}, we see that the LSE is computationally efficient, but Table~\ref{sim2reg3} shows it can suffer poor performance. The computation time of the PBE including the estimation procedure is faster than that of the GLSE and the empirical ratios of the PBE relative to the GLSE are close to 1.0 in Table~\ref{sim2reg3}. 

\begin{table}
% table caption is above the table
\caption{Time required to calculate each estimator for the three approximations (seconds).}
\label{sim3}       % Give a unique label
% For LaTeX tables use
\scalebox{0.68}{
\begin{tabular}{ccccccc} \hline
 &                                         & LSE & GLSE    &  PBE(AR(1)$\times$AR(1)) & PBE(AR(1)$\times$AR(2)) &  PBE(AR(2)$\times$AR(2))     \\ \hline
 \multicolumn{2}{c}{estimation procedure (sec.)}  &  -  &  -     &          0.0096          &           0.0086        &            0.013             \\ \hline
 \multicolumn{2}{c}{calculation time of estimators (sec.)} &  $4.43*10^{-4}$  & 384.793  &  2.368   & 2.404  &  2.399                \\ \hline
\end{tabular}
}
\end{table}

\section{Conclusion and future studies}

We have proposed the PBE using the separable approximation of the covariance function on lattice data as an alternative estimator of the GLSE which is practically infeasible owing to its computational burden and the unknown covariance structure. We derived the asymptotic covariance matrix of the PBE and examined the effect of the misspecification of the covariance function in the GLSE. The PBE by the separable approximation works well in many simulations even if the true covariance function is isotropic. In particular, when the LSE is not asymptotically efficient, the PBE exhibits superior performance. Moreover, the PBE substantially reduces the computation time  because of its separable structure. 

In future work, we will present a theoretical comparison of the asymptotic covariance matrices of the PBE and LSE. In addition, the extension of the true process to strongly dependent random fields should be considered.
%
%----------------------------------------------------------------------
% Appendix
%---------------------------------------------------------------------- 
%
\renewcommand{\theequation}{A.\arabic{equation}}
\setcounter{equation}{0}
\bigskip
\\
\textbf{Appendix A : Technical Lemmas}
\bigskip

To derive the asymptotic covariance matrix of $\hat{\bm{\beta}}_{PBE}$, we shall prove the following two lemmas.

\begin{Alemm}
\label{lem1}
Let $(\pi_{1,1}^{m_1,m_2},  \pi_{1,2}^{m_1,m_2}, \ldots, \pi_{1,N}^{m_1,m_2}, \pi_{2,1}^{m_1,m_2}, \ldots, \pi_{N,1}^{m_1,m_2}, \ldots, \pi_{N,N}^{m_1,m_2})^{'}$ be the solution of
\begin{align}
\tilde{\Sigma} 
\begin{pmatrix}
  \pi_{1,1}^{m_1,m_2}  \\
  \vdots    \\
  \pi_{N,N}^{m_1,m_2}
\end{pmatrix}
= 
\begin{pmatrix}
  \gamma(m_1,m_2)  \\
  \gamma(m_1,m_2+1)  \\
  \vdots  \\
  \gamma(m_1,m_2+N-1)  \\
  \gamma(m_1+1,m_2)  \\
  \vdots   \\
  \gamma(m_1+N-1,m_2+N-1)
\end{pmatrix},\quad m_1,m_2 \in \mathbb{Z},  \label{bounded}
\end{align}
where $\tilde{\Sigma}$ and $\gamma$ correspond to those in the definition of $\hat{\bm{\beta}}_{PBE}$ in Section 2.
For fixed $m_1 \ge -N + 1$ and $m_2 \ge -N + 1$, 
\[
\sup_{N} \sum_{i=1}^{N} \sum_{j=1}^{N} | \pi_{i,j}^{m_1,m_2} | < \infty.
\]
\end{Alemm}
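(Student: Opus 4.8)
The plan is to exploit the separable (Kronecker) structure of $\tilde{\Sigma}$ and $\gamma$ to reduce the two–dimensional $\ell_1$ bound to two independent one–dimensional bounds, and then to control each one–dimensional solution by the operator $\ell_1$–norm of the corresponding AR precision matrix. First I would observe that, because $\gamma(h_1,h_2)=\gamma_1(h_1)\gamma_2(h_2)$, the right–hand side of (\ref{bounded}) factors as a Kronecker product: writing $b^{(k)}=(\gamma_k(m_k),\gamma_k(m_k+1),\ldots,\gamma_k(m_k+N-1))^{'}$ for $k=1,2$, the entry at position $(i,j)$ equals $\gamma_1(m_1+i-1)\gamma_2(m_2+j-1)=b^{(1)}_i b^{(2)}_j$, so the vector is $b^{(1)}\otimes b^{(2)}$. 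Since $\tilde{\Sigma}^{-1}=\tilde{\Sigma}_1^{-1}\otimes\tilde{\Sigma}_2^{-1}$ and $(\tilde{\Sigma}_1^{-1}\otimes\tilde{\Sigma}_2^{-1})(b^{(1)}\otimes b^{(2)})=(\tilde{\Sigma}_1^{-1}b^{(1)})\otimes(\tilde{\Sigma}_2^{-1}b^{(2)})$, the solution factors as $\pi_{i,j}^{m_1,m_2}=\pi^{(1)}_i\pi^{(2)}_j$ with $\pi^{(k)}=\tilde{\Sigma}_k^{-1}b^{(k)}$. Consequently
\[
\sum_{i=1}^{N}\sum_{j=1}^{N}\bigl|\pi_{i,j}^{m_1,m_2}\bigr|=\Bigl(\sum_{i=1}^{N}\bigl|\pi^{(1)}_i\bigr|\Bigr)\Bigl(\sum_{j=1}^{N}\bigl|\pi^{(2)}_j\bigr|\Bigr),
\]
and it suffices to prove $\sup_N\sum_{i=1}^{N}|\pi^{(k)}_i|<\infty$ for each $k$.

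For the one–dimensional bound I would use submultiplicativity of the induced $\ell_1$–norm,
\[
\sum_{i=1}^{N}\bigl|\pi^{(k)}_i\bigr|=\bigl\|\tilde{\Sigma}_k^{-1}b^{(k)}\bigr\|_1\le\bigl\|\tilde{\Sigma}_k^{-1}\bigr\|_1\,\bigl\|b^{(k)}\bigr\|_1,
\]
where $\|\cdot\|_1$ on the matrix denotes the maximum absolute column sum. The factor $\|b^{(k)}\|_1=\sum_{l=0}^{N-1}|\gamma_k(m_k+l)|$ is dominated by $\sum_{h\in\mathbb{Z}}|\gamma_k(h)|$, which is finite and independent of $N$ and $m_k$ because the autocovariances of a causal AR$(P_k)$ process decay geometrically. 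The factor $\|\tilde{\Sigma}_k^{-1}\|_1$ is controlled using the explicit form of the inverse covariance matrix of a finite stretch of a causal AR$(P_k)$ process (Anderson 1971, p.\,576): $\tilde{\Sigma}_k^{-1}$ is banded with bandwidth $P_k$, its interior entries are fixed finite combinations of the AR coefficients, and the modified entries in the first and last $P_k$ rows and columns are likewise bounded independently of $N$. Hence every column has at most $2P_k+1$ nonzero entries, each bounded by a constant $C_k$ not depending on $N$, so $\|\tilde{\Sigma}_k^{-1}\|_1\le(2P_k+1)C_k$ uniformly in $N$.

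Combining these bounds gives $\sum_{i=1}^N|\pi^{(k)}_i|\le(2P_k+1)C_k\sum_{h\in\mathbb{Z}}|\gamma_k(h)|=:M_k<\infty$ uniformly in $N$, and therefore $\sup_N\sum_{i,j}|\pi_{i,j}^{m_1,m_2}|\le M_1M_2<\infty$, which is the assertion. The main obstacle is the last ingredient of the second paragraph: verifying that the column sums of the AR precision matrix stay bounded as $N\to\infty$, and in particular that the corner (edge) corrections of the exact inverse do not grow with $N$. The banded interior part is trivially bounded, so the whole argument hinges on this uniform control of the boundary terms, which is exactly what the explicit inverse formula supplies; everything else is the Kronecker bookkeeping of the first paragraph and the geometric decay of AR autocovariances.
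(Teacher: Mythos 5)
Your argument is correct, but it reaches the one-dimensional bound by a genuinely different route than the paper. The Kronecker reduction in your first paragraph is exactly the paper's first step: there too the right-hand side of (\ref{bounded}) factors as a Kronecker product of the two one-dimensional autocovariance vectors, so that $\pi_{i,j}^{m_1,m_2}=\pi_{i,1}^{m_1}\pi_{j,2}^{m_2}$ and the double sum splits into a product. From that point, however, the paper proceeds by induction on $m_k$, interpreting the one-dimensional solution as the coefficient vector of the finite-past best linear predictor: for $-N+1\le m_k\le 0$ the solution is a coordinate vector, for $m_k=1$ it consists of the AR coefficients $\phi_{1,k},\ldots,\phi_{P_k,k}$ followed by zeros, and the inductive step applies the projection identity $P_{\overline{\mathrm{sp}}(z_1,\ldots,z_N)}z_{N+m+1}=P_{\overline{\mathrm{sp}}(z_1,\ldots,z_N)}\bigl(P_{\overline{\mathrm{sp}}(z_1,\ldots,z_{N+1})}z_{N+m+1}\bigr)$ to obtain the recursion $\pi_{l,k}^{m+1}=\pi_{1,k}^{m}\pi_{l,k}^{1}+\pi_{l+1,k}^{m}$, from which the $\ell_1$ bound propagates for each fixed $m_k$. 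You instead bound $\sum_{i}|\pi_i^{(k)}|\le\|\tilde{\Sigma}_k^{-1}\|_1\|b^{(k)}\|_1$, controlling $\|b^{(k)}\|_1$ by the absolute summability (geometric decay) of AR autocovariances and $\|\tilde{\Sigma}_k^{-1}\|_1$ by the banded explicit inverse $\tilde{\Sigma}_k^{-1}=B_k'B_k/\sigma_k^2$ of Anderson (1971, p.~576) --- the very formula the paper defers to the proof of Theorem 4 --- whose boundary blocks are fixed once $N$ exceeds roughly $2P_k$, so that every column has at most $2P_k+1$ entries bounded uniformly in $N$; this is the crux you correctly identified, and it does hold for the AR precision matrix. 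What each approach buys: yours is shorter, avoids induction entirely, and yields a bound uniform in $m_1,m_2$ (the paper's recursion a priori allows the constant to grow with $m_k$, which is harmless since $m_1,m_2$ are fixed); the paper's argument, in exchange, produces explicit identities --- the coordinate-vector and AR-filter base cases and the identification of $\pi^{m_k}$ as prediction coefficients --- which Appendix B reuses when it invokes ``Lemma A.1 and its proof'' to show that the boundary term (\ref{second}) vanishes. Your version would serve there equally well, since the factorization $\pi_{i,j}^{m_1,m_2}=\pi_{i,1}^{m_1}\pi_{j,2}^{m_2}$ and the uniform one-dimensional $\ell_1$ bounds are precisely what (\ref{second}) consumes.
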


\begin{proof}
(\ref{bounded}) is expressed by
\[
\left( \tilde{\Sigma}_1 \bigotimes  \tilde{\Sigma}_2  \right)
\begin{pmatrix}
  \pi_{1,1}^{m_1,m_2}  \\
  \vdots    \\
  \pi_{N,N}^{m_1,m_2}
\end{pmatrix}
= 
\begin{pmatrix}
  \gamma_{1}(m_1)  \\
  \vdots    \\
  \gamma_{1}(m_1+N-1)
\end{pmatrix}
\bigotimes
\begin{pmatrix}
  \gamma_{2}(m_2)  \\
  \vdots    \\
  \gamma_{2}(m_2+N-1)
\end{pmatrix}.
\]
Then, from the property of the Kronecker product (see Horn and Johnson 1991; page 244), 
\[
\begin{pmatrix}
  \pi_{1,1}^{m_1,m_2}  \\
  \vdots    \\
  \pi_{N,N}^{m_1,m_2}
\end{pmatrix}
= \left( \tilde{\Sigma}_1^{-1} 
\begin{pmatrix}
  \gamma_{1}(m_1)  \\
  \vdots    \\
  \gamma_{1}(m_1+N-1)
\end{pmatrix}  \right)
\bigotimes
\left( 
\tilde{\Sigma}_2^{-1}
\begin{pmatrix}
  \gamma_{2}(m_2)  \\
  \vdots    \\
  \gamma_{2}(m_2+N-1)
\end{pmatrix}
 \right).
\]
Now, let $(\pi_{1,k}^{m_k},  \pi_{2,k}^{m_k}, \ldots, \pi_{N,k}^{m_k})^{'}$ be the solution of
\begin{align*}
\tilde{\Sigma}_k
\begin{pmatrix}
  \pi_{1,k}^{m_k}  \\
  \vdots    \\
  \pi_{N,k}^{m_k}
\end{pmatrix}
= 
\begin{pmatrix}
  \gamma_{k}(m_k)  \\
  \vdots    \\
  \gamma_{k}(m_k+N-1)
\end{pmatrix}
\end{align*}
for $k=1,2$.  We will prove 
\begin{align}
\sup_N \sum_{i=1}^{N} |\pi_{i,k}^{m_k}| < \infty \quad (m_k \ge -N+1)
\label{bounded2}
\end{align}
by mathematical induction. In $-N+1 \le m_k \le 0$ ($k=1,2$), $\pi_{i,k}^{m_k}=\delta_{i,(-m_k + 1)}$ from the uniqueness of $(\pi_{1,k}^{m_k},  \pi_{2,k}^{m_k}, \ldots, \pi_{N,k}^{m_k})^{'}$ where $\delta_{i,(-m_k + 1)}$ is 1 if $i=-m_k + 1$, otherwise 0. Therefore, (\ref{bounded2}) is clear in this case. When $m_k=1$, 
\[
\tilde{\Sigma}_k
\begin{pmatrix}
  \pi_{1,k}^{1}  \\
  \vdots    \\
  \pi_{N,k}^{1}
\end{pmatrix}
= 
\begin{pmatrix}
  \gamma_{k}(1)  \\
  \vdots    \\
  \gamma_{k}(N)
\end{pmatrix}.
\]
If $z_t$ is AR($P_k$) with the autocovariance function $\gamma_k$, that is $z_t = \phi_{1,k}z_{t-1} + \cdots + \phi_{P_k,k}z_{t-P_k} + \eta_t$ where $\eta_t$ is white noise, $\pi_{j,k}^1 = \phi_{j,k}$ ($j=1,\ldots,P_k$) and $\pi_{j,k}^1 = 0$ ($j > P_k$). Therefore, $\sup_N \sum_{i=1}^{N} |\pi_{i,k}^{1}| < \infty$. Next, assume that (\ref{bounded2}) holds for $m_k=m$. Consider $m_k = m+1$. Then, from a routine calculation, $P_{\overline{\rm{sp}}(z_1,\ldots,z_N)}z_{N+m+1}=\sum_{l=1}^{N} \pi_{l,k}^{m+1} z_{N+1-l}$ where $P_{\overline{\rm{sp}}(z_1,\ldots,z_N)}$ is the projection to the closed subspace $\overline{\rm{sp}}(z_1,\ldots,z_N)$ spanned by $(z_1,\ldots,z_N)$. Moreover, from a similar calculation, 
\begin{align*}
P_{\overline{\rm{sp}}(z_1,\ldots,z_N)} z_{N+m+1} &= P_{\overline{\rm{sp}}(z_1,\ldots,z_N)} \left( P_{\overline{\rm{sp}}(z_1,\ldots,z_{N+1})} z_{N+m+1} \right) \\
&= P_{\overline{\rm{sp}}(z_1,\ldots,z_N)} \left( \sum_{l=1}^{N+1} \pi_{l,k}^{m} z_{N+2-l} \right) \\
&= P_{\overline{\rm{sp}}(z_1,\ldots,z_N)} \pi_{1,k}^{m} z_{N+1} + \sum_{l=2}^{N+1} \pi_{l,k}^{m} z_{N+2-l} \\
&=  \pi_{1,k}^{m} \left( \sum_{l=1}^{N} \pi_{l,k}^{1} z_{N+1-l} \right) + \sum_{l=2}^{N+1} \pi_{l,k}^{m} z_{N+2-l}.
\end{align*}
Therefore, $\pi_{l,k}^{m+1}=\pi_{1,k}^{m} \pi_{l,k}^{1} + \pi_{l+1,k}^{m}$ ($l=1,\ldots,N$). In this case,
\[
\sup_N \sum_{i=1}^{N} |\pi_{i,k}^{m+1}| \le \sup_N \sum_{i=1}^{N} | \pi_{1,k}^{m} | |\pi_{i,k}^{1}| + \sup_N \sum_{i=1}^{N} |\pi_{i+1,k}^{m}| < \infty.
\]
(\ref{bounded2}) holds for $m_k=m+1$. Note that $\pi_{i,j}^{m_1,m_2} = \pi_{i,1}^{m_1}\pi_{j,2}^{m_2}$ for $1 \le i,j \le N$. Then,
\[
\sum_{i=1}^{N} \sum_{j=1}^{N} | \pi_{i,j}^{m_1,m_2} | = \sum_{i=1}^{N} |\pi_{i,1}^{m_1}| \sum_{j=1}^{N} |\pi_{j,2}^{m_2}|
\]
is bounded for $N$.
\end{proof}

\begin{Alemm}
\label{lem2}
Suppose that $h(\lambda_1,\lambda_2)$ is a continuous function on $[-\pi,\pi]^2$ and $h(\lambda_1,\lambda_2)=h(-\lambda_1,\lambda_2)=h(\lambda_1,-\lambda_2)=h(-\lambda_1,-\lambda_2)$. Then, for any sufficiently small $\epsilon > 0$, there exist $h_L(\lambda_1,\lambda_2)$ and $h_U(\lambda_1,\lambda_2)$ such that 
\begin{align*}
& h_L(\lambda_1,\lambda_2) = \sum_{k_1=-K_1}^{K_1} \sum_{k_2=-K_2}^{K_2} a_{k_1,k_2} e^{i(k_1 \lambda_1 + k_2 \lambda_2)}, \quad a_{k_1,k_2} = a_{-k_1,-k_2} \in \mathbb{R}, \\
& h_U(\lambda_1,\lambda_2) = \sum_{k_1=-K_1}^{K_1} \sum_{k_2=-K_2}^{K_2} b_{k_1,k_2} e^{i(k_1 \lambda_1 + k_2 \lambda_2)}, \quad b_{k_1,k_2} = b_{-k_1,-k_2} \in \mathbb{R}, \\
& h_L(\lambda_1,\lambda_2) \le h(\lambda_1,\lambda_2) \le h_U(\lambda_1,\lambda_2) 
 \intertext{and} 
& h_U(\lambda_1,\lambda_2) - h_L(\lambda_1,\lambda_2) \le \epsilon, \quad (\lambda_1,\lambda_2) \in [-\pi,\pi]^2.
\end{align*}
\end{Alemm}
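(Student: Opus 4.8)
The plan is to reduce the statement to a \emph{one-sided} uniform approximation of $h$ by a single real trigonometric polynomial and then manufacture the sandwich by shifting that polynomial up and down by a constant. The essential analytic input is that a continuous function on the two-dimensional torus can be uniformly approximated by trigonometric polynomials; the role of the symmetry hypothesis on $h$ is precisely to guarantee that the approximant can be taken with the prescribed real, symmetric coefficients.

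First I would regard $h$ as a continuous function on the torus obtained from $[-\pi,\pi]^2$ by identifying opposite edges. This is legitimate: the hypotheses $h(\pi,\lambda_2)=h(-\pi,\lambda_2)$ and $h(\lambda_1,\pi)=h(\lambda_1,-\pi)$ force the boundary values to agree, so the periodic extension is continuous. By the two-dimensional Fej\'er (Ces\`aro) summation theorem, or equivalently by the Stone--Weierstrass theorem applied to the algebra of trigonometric polynomials, there is for the given $\epsilon>0$ a trigonometric polynomial $Q$ with $\sup_{(\lambda_1,\lambda_2)}|h-Q|\le \epsilon/2$.

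Second I would enforce the required coefficient structure by symmetrizing. Averaging $Q$ over the four sign changes $(\lambda_1,\lambda_2)\mapsto(\pm\lambda_1,\pm\lambda_2)$ and then taking the real part produces a trigonometric polynomial $P$ whose Fourier coefficients are real and even in each index; in particular $a_{k_1,k_2}=a_{-k_1,-k_2}\in\mathbb{R}$, as demanded. Since $h$ is itself real and invariant under these same four sign changes, none of these operations increases the sup-distance, so $\sup|h-P|\le \epsilon/2$ is retained. This symmetrization step is exactly where the axial-symmetry hypothesis is consumed: a generic real-valued approximant has only conjugate-symmetric coefficients, and it is the invariance of $h$ that lets us upgrade this to genuinely real, point-symmetric coefficients.

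Finally I would set $h_L=P-\epsilon/2$ and $h_U=P+\epsilon/2$. Adding the real constant $\mp\epsilon/2$ alters only the coefficient $a_{0,0}$ and hence preserves the coefficient conditions. The bound $|h-P|\le \epsilon/2$ gives $h_L\le h\le h_U$ at once, while $h_U-h_L=\epsilon$. I do not anticipate a serious obstacle: the sandwich is immediate once uniform approximation is available, and the only point demanding care is the coefficient bookkeeping in the symmetrization, which I would verify by tracking how each sign flip and the conjugation act on the coefficients $c_{k_1,k_2}$ of $Q$.
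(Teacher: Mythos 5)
Your proposal is correct and takes essentially the same route as the paper: both obtain a uniform trigonometric-polynomial approximation of the periodized $h$ (the paper via two-dimensional Fej\'er/Ces\`aro means, whose coefficients inherit the axial symmetry of $h$ automatically) and then manufacture the sandwich by shifting the constant coefficient by $\pm\epsilon/2$. Your only variation is securing the real, point-symmetric coefficients through an explicit symmetrization over the four sign flips instead of reading them off the Ces\`aro means directly, which is a harmless difference of bookkeeping.
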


\begin{proof}
We set $D_{n_1, n_2}(y_1,y_2) = \sum_{|j_1| \le n_1}\sum_{|j_2| \le n_2} e^{i(j_1 y_1 + j_2 y_2)} = D_{n_1}(y_1)D_{n_2}(y_2)$ where $D_{n_i}(y_i) = \sum_{|j_i| \le n_i} e^{ij_i y_i}$ ($i=1,2$). $D_{n_i}(y_i)$ is the Dirichlet kernel,
\[
D_{n_i}(y_i) = \begin{cases}
                   \frac{\sin \left[ \left(n_i + \frac{1}{2} \right) y_i \right]}{\sin \left( \frac{1}{2} y_i \right)} & \text{if $y_i \neq 0$}, \\
                   2 n_i + 1 & \text{if $y_i = 0$}.
               \end{cases}
\]
Additionally, consider
\begin{align*}
S_{n_1, n_2} h(\lambda_1^{'},\lambda_2^{'}) &= \sum_{|j_1| \le n_1}\sum_{|j_2| \le n_2} \left( \frac{1}{(2 \pi)^2}
\int_{-\pi}^{\pi} \int_{-\pi}^{\pi} h(\lambda_1,\lambda_2) 
e^{-i(j_1 \lambda_1 + j_2 \lambda_2)} d\lambda_1 d\lambda_2   \right) e_{j_1,j_2} \\
&= \frac{1}{(2 \pi)^2} \int_{-\pi}^{\pi} \int_{-\pi}^{\pi} h(\lambda_1,\lambda_2) D_{n_1}(\lambda_1^{'}-\lambda_1)  D_{n_2}(\lambda_2^{'}-\lambda_2) d\lambda_1 d\lambda_2,
\end{align*}
where $e_{j_1,j_2}=e^{i(j_1 \lambda_1 + j_2 \lambda_2)}$. By defining $h(\lambda_1,\lambda_2) = h(\lambda_1+2\pi n_1^{'},\lambda_2+2\pi n_2^{'})$ ($n_1^{'},n_2^{'} \in \mathbb{Z}$), this can be rewritten as
\[
S_{n_1, n_2} h(\lambda_1^{'},\lambda_2^{'}) = \frac{1}{(2 \pi)^2} \int_{-\pi}^{\pi} \int_{-\pi}^{\pi} h(\lambda_1^{'}-\lambda_1,\lambda_2^{'}-\lambda_2) D_{n_1}(\lambda_1 ) D_{n_2}(\lambda_2) d\lambda_1 d\lambda_2.
\]
Moreover,
\[
\frac{1}{n_1 n_2} \sum_{i_1=0}^{n_1-1} \sum_{i_2=0}^{n_2-1} S_{i_1, i_2} h(\lambda_1^{'},\lambda_2^{'}) = \int_{-\pi}^{\pi} \int_{-\pi}^{\pi} h(\lambda_1^{'}-\lambda_1,\lambda_2^{'}-\lambda_2) K_{n_1}(\lambda_1 ) K_{n_2}(\lambda_2) d\lambda_1 d\lambda_2,
\]
where $K_{n_i}(y_i)$ is defined by
\[
\frac{1}{2\pi n_i} \sum_{j_i=0}^{n_i-1} D_{j_i}(y_i) \quad (i=1,2)
\]
and is called the Fejer kernel.
Finally, it follows from the argument extended from Theorem 2.11.1 of Brockwell and Davis (1991; page 69) and the axial symmetry of $h(\lambda_1,\lambda_2)$ that for any $\epsilon > 0$, 
\[
\left| h(\lambda_1, \lambda_2) - \frac{1}{n_1 n_2} \sum_{i_1=0}^{n_1-1} \sum_{i_2=0}^{n_2-1} S_{i_1, i_2} h(\lambda_1,\lambda_2) \right| < \epsilon,
\]
uniformly on $[-\pi,\pi]^2$ for sufficiently large $n_1$ and $n_2$. In that case,
\begin{align*}
\frac{1}{n_1 n_2} \sum_{i_1=0}^{n_1-1} \sum_{i_2=0}^{n_2-1} S_{i_1, i_2} h(\lambda_1,\lambda_2) = \sum_{j_1=-(n_1-1)}^{n_1-1} \sum_{j_2=-(n_2-1)}^{n_2-1} c_{j_1,j_2} e_{j_1,j_2},
\end{align*}
where $c_{j_1,j_2}=c_{-j_1,j_2}=c_{j_1,-j_2}=c_{-j_1,-j_2}$ from the axial symmetry of $h(\lambda_1,\lambda_2)$. Therefore, $c_{j_1,j_2}$'s are real and
\[
\sum_{j_1=-(n_1-1)}^{n_1-1} \sum_{j_2=-(n_2-1)}^{n_2-1} c_{j_1,j_2} e_{j_1,j_2} = \sum_{j_1=-(n_1-1)}^{n_1-1} \sum_{j_2=-(n_2-1)}^{n_2-1} c_{j_1,j_2}(\cos(j_1 \lambda_1 + j_2 \lambda_2) + i \sin(j_1 \lambda_1 + j_2 \lambda_2)).
\]
Now, it follows that
\[
 \sum_{j_1=-(n_1-1)}^{n_1-1} \sum_{j_2=-(n_2-1)}^{n_2-1} c_{j_1,j_2} \sin(j_1 \lambda_1 + j_2 \lambda_2) = 0.
\]
Therefore, $\sum_{|j_1| \le n_1 - 1} \sum_{|j_2| \le n_2 - 1} c_{j_1,j_2} e_{j_1,j_2}$ is real. For any $\epsilon > 0$, 
\[
\sum_{j_1=-(n_1-1)}^{n_1-1} \sum_{j_2=-(n_2-1)}^{n_2-1} c_{j_1,j_2} e_{j_1,j_2} -\frac{\epsilon}{2} \le h(\lambda_1, \lambda_2) \le \sum_{j_1=-(n_1-1)}^{n_1-1} \sum_{j_2=-(n_2-1)}^{n_2-1} c_{j_1,j_2} e_{j_1,j_2} + \frac{\epsilon}{2}
\]
for sufficiently large $n_1$ and $n_2$. Therefore, setting $K_1 = n_1 - 1, K_2 = n_2 - 1, a_{k_1,k_2}= b_{k_1,k_2} = c_{k_1,k_2}$ except $a_{0,0}=c_{0,0}-\epsilon/2, b_{0,0}=c_{0,0}+\epsilon/2$, the proof is completed.

\end{proof}
\renewcommand{\theequation}{B.\arabic{equation}}
\setcounter{equation}{0}
\noindent
\textbf{Appendix B : Proof of Theorem 4}
\bigskip

We prove Theorem 4 by extending the arguments of Anderson (1971) and Yajima (1994).
\par
\bigskip
\noindent
\textbf{Proof of Theorem 4}
\bigskip
\begin{align*}
&D_{N^2} E\left[ (\hat{\bm{\beta}}_{PBE} - \bm{\beta}) (\hat{\bm{\beta}}_{PBE} - \bm{\beta})^{'} \right] D_{N^2}\\
&= \left( D_{N^2}^{-1} X^{'} \tilde{\Sigma}^{-1} X D_{N^2}^{-1} \right)^{-1} \left( D_{N^2}^{-1} X^{'} \tilde{\Sigma}^{-1} \Sigma \tilde{\Sigma}^{-1} X D_{N^2}^{-1} \right) \left( D_{N^2}^{-1} X^{'} \tilde{\Sigma}^{-1} X D_{N^2}^{-1} \right)^{-1}.
\end{align*}
From Theorem 1, the first and third terms converge to 
\[
\left( D_{N^2}^{-1} X^{'} \tilde{\Sigma}^{-1} X D_{N^2}^{-1} \right)^{-1} \to \left( \frac{1}{(2\pi)^2} \int_{\Pi^2} \frac{1}{g(\lambda_1,\lambda_2)} dM(\lambda_1,\lambda_2) \right)^{-1}
\] 
as $N \to \infty$. Here, we put 
\begin{align*}
f(\lambda_1,\lambda_2) &= \frac{f(\lambda_1,\lambda_2)}{g(\lambda_1,\lambda_2)} \times g(\lambda_1,\lambda_2) \\
&= h(\lambda_1,\lambda_2) \times g(\lambda_1,\lambda_2).
\end{align*}
First, consider the case of 
\[
h(\lambda_1,\lambda_2) = \sum_{l_1=-L_1}^{L_1} \sum_{l_2=-L_2}^{L_2} b_{l_1,l_2} e^{i(l_1 \lambda_1 + l_2 \lambda_2)}, \quad b_{l_1,l_2} = b_{-l_1,-l_2}.
\]
In this case, 
\begin{align*}
&E[\epsilon_{(j_1,j_2)} \epsilon_{(k_1,k_2)}] = \gamma_{\epsilon} (j_1-k_1,j_2-k_2) \\
&= \int_{-\pi}^{\pi} \int_{-\pi}^{\pi} e^{i( (j_1 - k_1)\lambda_1 + (j_2 - k_2)\lambda_2 )} f(\lambda_1,\lambda_2) d\lambda_1 d\lambda_2 \\
&= \sum_{l_1=-L_1}^{L_1} \sum_{l_2=-L_2}^{L_2} b_{l_1,l_2} \int_{-\pi}^{\pi} \int_{-\pi}^{\pi} e^{i( (j_1 - k_1+l_1)\lambda_1 + (j_2 - k_2+l_2)\lambda_2 )} g(\lambda_1,\lambda_2) d\lambda_1 d\lambda_2.
\end{align*} 
Therefore,
\[
\Sigma = \sum_{l_1=-L_1}^{L_1} \sum_{l_2=-L_2}^{L_2} b_{l_1,l_2} \tilde{\Sigma}(l_1,l_2),
\]
where
\begin{align*}
&\tilde{\Sigma}(l_1,l_2)= \\
&\left(
{\arraycolsep = 0.5mm
\begin{array}{cccc}
\gamma(l_1,l_2) & \gamma(l_1,l_2-1) & \cdots & \gamma(l_1+1-N,l_2)  \\
\gamma(l_1,l_2+1) & \gamma(l_1,l_2) & \cdots & \gamma(l_1+1-N,l_2+1)  \\
      \vdots      &         \vdots         &        &      \vdots      \\
\gamma(l_1+N-1,l_2) & \gamma(l_1+N-1,l_2-1) & \cdots & \gamma(l_1,l_2)  \\ 
       \vdots      &            \vdots      &        &     \vdots        \\
\gamma(l_1+N-1,l_2+N-1) & \gamma(l_1+N-1,l_2+N-2) & \cdots & \gamma(l_1,l_2+N-1) 
\end{array}}
\right.
\\
&
\left.
\begin{array}{ccc}
 \cdots & \gamma(l_1+1-N,l_2+2-N) & \gamma(l_1+1-N,l_2+1-N) \\
 \cdots & \gamma(l_1+1-N,l_2+3-N) & \gamma(l_1+1-N,l_2+2-N) \\
                       & \vdots &        \vdots            \\
 \cdots & \gamma(l_1,l_2+2-N) & \gamma(l_1,l_2+1-N) \\ 
                       & \vdots &        \vdots            \\
 \cdots & \gamma(l_1,l_2+1) & \gamma(l_1,l_2) 
\end{array}
\right).
\end{align*}
Consider the case of $l_1 \ge 0$ and $l_2 \ge 0$ for sufficiently large $N$.
% Let $(\pi_{1,1}^{m_1m_2},  \pi_{1,2}^{m_1m_2}, \ldots, \pi_{1,N}^{m_1m_2}, \pi_{2,1}^{m_1m_2}, \ldots, \pi_{N,1}^{m_1m_2}, \ldots, \pi_{N,N}^{m_1m_2})^{'}$ be the solution of
% \[
% \tilde{\Sigma} 
% \begin{pmatrix}
%   \pi_{1,1}^{m_1m_2}  \\
%   \vdots    \\
%   \pi_{N,N}^{m_1m_2}
% \end{pmatrix}
% = 
% \begin{pmatrix}
%   \gamma(m_1,m_2)  \\
%   \gamma(m_1,m_2+1)  \\
%   \vdots  \\
%   \gamma(m_1,m_2+N-1)  \\
%   \gamma(m_1+1,m_2)  \\
%   \vdots   \\
%   \gamma(m_1+N-1,m_2+N-1)
% \end{pmatrix}
% \]
In this case, the $(m,n)$th element $\left( D_{N^2}^{-1} X^{'} \tilde{\Sigma}^{-1} \tilde{\Sigma}(l_1,l_2) \tilde{\Sigma}^{-1} X D_{N^2}^{-1} \right)_{m,n}$ is 
\begin{align}
&\frac{1}{\left( a_{mm}^{(N,N)}(0,0)a_{nn}^{(N,N)}(0,0) \right)^{1/2}}  \Bigg\{  (\bm{x}_{(1,\cdot),m}^{'}, \ldots, \bm{x}_{(N,\cdot),m}^{'}) \tilde{\Sigma}^{-1} \nonumber  \\
&\times \left( \right.
  x_{(l_1+1,l_2+1),n},
  \ldots,
  x_{(l_1+1,N),n},
  0  ,
  \ldots   ,
  0   ,
  \ldots   ,
  x_{(N,l_2+1),n}  ,
  \ldots  ,
  x_{(N,N),n}  ,
  0  ,
  \ldots   ,
  0 , \nonumber  \\
&  0   ,
  \ldots   ,
  0
\left. \right)^{'} +  ( \bm{x}_{(1,\cdot ),m}^{'}, \ldots, \bm{x}_{(N,\cdot ),m}^{'} ) \tilde{\Sigma}^{-1} \nonumber \\
& \times  
\Bigg( 
  0  ,
  \ldots  ,
  0  ,
  \sum_{i=1}^{N} \sum_{j=1}^{N} \pi_{i,j}^{l_1+1-N,1} x_{(N+1-i,N+1-j),n}  ,
  \ldots   ,
  \sum_{i=1}^{N} \sum_{j=1}^{N} \pi_{i,j}^{l_1+1-N,l_2} \nonumber \\
&\times x_{(N+1-i,N+1-j),n}   ,
  \ldots   ,
  0  ,
  \ldots  ,
  0  ,
  \sum_{i=1}^{N} \sum_{j=1}^{N} \pi_{i,j}^{0,1} x_{(N+1-i,N+1-j),n}  ,
  \ldots  ,
  \sum_{i=1}^{N} \sum_{j=1}^{N} \pi_{i,j}^{0,l_2} \nonumber \\
&\times x_{(N+1-i,N+1-j),n}   ,
  \sum_{i=1}^{N} \sum_{j=1}^{N} \pi_{i,j}^{1,l_2+1-N} x_{(N+1-i,N+1-j),n}   ,
  \ldots   , \nonumber \\
&  \sum_{i=1}^{N} \sum_{j=1}^{N} \pi_{i,j}^{l_1,l_2} x_{(N+1-i,N+1-j),n}
\Bigg) ^{'}
  \Bigg\} \label{original}
\end{align}
where $\bm{x}_{(i,\cdot),m} = (x_{(i,1),m}, \ldots, x_{(i,N),m})^{'}$ ($i = 1,\ldots,N$) and $\{\pi_{i,j}^{m_1,m_2}\}$ is defined in Lemma~\ref{lem1}. 

Now, $g_j(\lambda_j)$ is expressed by
\[
\frac{\sigma_j^2}{2\pi} \frac{1}{|\phi_j(e^{- i \lambda_j})|^2},
\]
where $\phi_j(z) = 1 - \phi_{j,1}z - \cdots - \phi_{j,P_j}z^{P_j}$ and $\phi_j(z) \neq 0$ for 
$| z | \le 1$ ($j=1,2$). From Anderson (1971; page 576), for $i = 1,2$, $\tilde{\Sigma}^{-1}_i= B_i'B_i/\sigma_i^2$ where
\begin{align*}
B_i = \left(
\begin{array}{cccccccccccc}
b_{i,11} & 0 &  &  & &  & & & \cdots & & & 0 \\
\vdots & \ddots & & &  & &&&&&& \vdots \\
b_{i,P_i 1} & \cdots & b_{i,P_i P_i} & 0 & &&& & \cdots &&&0\\
-\phi_{i,P_i} & \cdots & -\phi_{i,1} & -\phi_{i,0} & 0 &&& & \cdots &&&0\\
0 & -\phi_{i,P_i} & \cdots & -\phi_{i,1} & -\phi_{i,0} & 0&& & \cdots &&&0 \\
&&&&&&&&&&& \\
\vdots &&&&&&&& \ddots &&& \vdots \\
&&&&&&&&&&& \\
&&&&&&&&&&& \\
0&&&&&&&&&& -\phi_{i,1} & -\phi_{i,0}
\end{array}
\right),
\end{align*}
$\phi_{i,0}=-1$ and $b_{11},b_{21},b_{22},\ldots,b_{P_i P_i}$ are chosen so that $\tilde{\Sigma}^{-1}_i= B_i'B_i/\sigma_i^2$. Moreover, from the properties of the Kronecker product (Horn and Johnson 1991; page 244), $\tilde{\Sigma}^{-1} = (\tilde{\Sigma}_1 \bigotimes \tilde{\Sigma}_2)^{-1} = \tilde{\Sigma}_1^{-1} \bigotimes \tilde{\Sigma}_2^{-1}
= (B_1'B_1 \bigotimes B_2'B_2)/(\sigma_1^2 \sigma_1^2) = (B_1' \bigotimes B_2')(B_1 \bigotimes B_2)/(\sigma_1^2 \sigma_1^2) = (B_1 \bigotimes B_2)'(B_1 \bigotimes B_2)/(\sigma_1^2 \sigma_1^2) $. By substituting this expression into the first term of (\ref{original}),
\begin{align}
&(\mbox{the first term of (\ref{original})}) = \frac{1}{\sigma_1^2 \sigma_2^2 \left( a_{mm}^{(N,N)}(0,0)a_{nn}^{(N,N)}(0,0) \right)^{1/2}} \nonumber\\
&\times \Biggl[  \sum_{j_1 = 1}^{P_1} \sum_{i_1 = 1}^{j_1} \sum_{i_2 = 1}^{j_1} b_{1,j_1 i_1} b_{1,j_1 i_2}       
\Bigg\{ \sum_{k=1}^{P_2} \left( \sum_{\alpha = 1}^{k} b_{2,k \alpha} x_{(i_1,\alpha),m} \right) \left( \sum_{\beta = 1}^{k} b_{2,k \beta} x_{(i_2 + l_1,\beta + l_2),n} \right) \nonumber\\
&+ \sum_{k=1}^{N-P_2} \left( \sum_{\alpha = 0}^{P_2} (-\phi_{2,P_2 - \alpha}) x_{(i_1,k + \alpha),m} \right) \left( \sum_{\beta = 0}^{P_2} (-\phi_{2,P_2 - \beta}) x_{(i_2 + l_1,k + l_2 + \beta ),n} 1(k + l_2 + \beta \le N) \right)   \Bigg\} \nonumber\\
&+ \sum_{a = 1}^{N-P_1} \sum_{i_1 = 0}^{P_1} \sum_{i_2 = 0}^{P_1} (-\phi_{1,P_1 - i_1})(-\phi_{1,P_1 - i_2})1(l_1 + a + i_2 \le N) \Bigg\{ \sum_{k=1}^{P_2} \left( \sum_{\alpha = 1}^{k} b_{2,k \alpha} x_{(i_1 + a, \alpha),m} \right) \nonumber\\
&\times \left( \sum_{\beta = 1}^{k} b_{2,k \beta} x_{(l_1 + a + i_2, N - l_2),n} \right) + \sum_{k=1}^{N-P_2} 
\left( \sum_{\alpha = 0}^{P_2} (-\phi_{2,P_2 - \alpha}) x_{(i_1 + a, k + \alpha),m} \right) \nonumber\\
&\times \left( \sum_{\beta = 0}^{P_2} (-\phi_{2,P_2 - \beta}) x_{(l_1 + a + i_2, k + l_2 + \beta ),n} 1(k + l_2 + \beta \le N) \right) 
\Bigg\}
\Biggl].  \label{first} 
\end{align}
It follows from (a)-(d) and a routine calculation that (\ref{first}) converges to 
\begin{align*}
&\frac{1}{\sigma_1^2 \sigma_2^2} \sum_{i_1=0}^{P_1} \sum_{i_2=0}^{P_1} (-\phi_{1,P_1-i_1}) (-\phi_{1,P_1-i_2}) \sum_{\alpha =0}^{P_2} \sum_{\beta =0}^{P_2}
(-\phi_{2,P_2-\alpha}) (-\phi_{2,P_2-\beta})  \\
&\times \int_{\Pi^2} e^{i(-l_1 - i_2 + i_1) \lambda_1}  e^{i(-l_2 - \beta + \alpha) \lambda_2} d M_{mn}(\lambda_1,\lambda_2) \\
= & \frac{1}{(2\pi)^2} \int_{\Pi^2} \frac{e^{-i(l_1 \lambda_1 + l_2 \lambda_2)}}{g_1(\lambda_1) g_2(\lambda_2)} d M_{mn}(\lambda_1,\lambda_2)
\end{align*}
as $N \to \infty$.
% It follows from the property of Kronecker product (see Horn and Johnson 1991), the exact form of the inverse matrix of AR($P_i$) ($i=1,2$), (a)-(d) and a routine calculation that
%  the first term of (\ref{original}) converges to 
% \[
% \frac{1}{(2\pi)^2} \int_{\Pi^2} \frac{e^{-i(l_1 \lambda_1 + l_2 \lambda_2)}}{g_1(\lambda_1) g_2(\lambda_2)} d M_{mn}(\lambda_1,\lambda_2)
% \]
% as $N \to \infty$. 
Next,
\begin{align*}
&|(\mbox{the second term of (\ref{original}))} | \le \frac{1}{\left( a_{mm}^{(N,N)}(0,0)a_{nn}^{(N,N)}(0,0) \right)^{1/2}} \\
&\times \left\{ (\bm{x}_{(1,\cdot),m}^{'}, \ldots, \bm{x}_{(N,\cdot),m}^{'}) \tilde{\Sigma}^{-1} (\bm{x}_{(1,\cdot),m}^{'}, \ldots, \bm{x}_{(N,\cdot),m}^{'})^{'} \right\}^{1/2} \\
&\times \Bigg\{ \left( 
  0  ,
  \ldots  ,
 \sum_{i=1}^{N} \sum_{j=1}^{N} \pi_{i,j}^{l_1,l_2} x_{(N+1-i,N+1-j),n}
\right)  \tilde{\Sigma}^{-1} \Bigg( 
  0  ,
  \ldots  , \sum_{i=1}^{N} \sum_{j=1}^{N} \pi_{i,j}^{l_1,l_2} x_{(N+1-i,N+1-j),n}
\Bigg)^{'} \Bigg\}^{1/2} 
\end{align*}
\begin{align}
& \le \frac{1}{\left( a_{mm}^{(N,N)}(0,0)a_{nn}^{(N,N)}(0,0) \right)^{1/2}} \left( \tilde{\lambda}_{N^2} \sum_{i=1}^{N} \sum_{j=1}^{N} x_{(i,j),m}^2 \right)^{1/2} \nonumber \\
& \times \Bigg\{ \tilde{\lambda}_{N^2} \Bigg( \sum_{m_1=l_1+1-N}^{0} \sum_{m_2=1}^{l_2} \left( \sum_{i=1}^{N} \sum_{j=1}^{N} \pi_{i,j}^{m_1,m_2} x_{(N+1-i,N+1-j),n} \right)^2 \nonumber \\
& + \sum_{m_1=1}^{l_1} \sum_{m_2=l_2+1-N}^{l_2} \left( \sum_{i=1}^{N} \sum_{j=1}^{N} \pi_{i,j}^{m_1,m_2} x_{(N+1-i,N+1-j),n} \right)^2  \Bigg)
\Bigg\}^{1/2} \nonumber \\
& \le \tilde{\lambda}_{N^2} \left( \frac{N \max_{1 \le t_1,t_2 \le N} x_{(t_1,t_2),n}^2}{a_{nn}^{(N,N)}(0,0)} \right)^{1/2} \Bigg\{ \sum_{m_1=1}^{l_1} \left( \sum_{i=1}^{N} | \pi_{i,1}^{m_1} | \right)^2 \nonumber \\
& + \sum_{m_2=1}^{l_2} \left( \sum_{j=1}^{N} | \pi_{j,2}^{m_2} | \right)^2 + \sum_{m_1=1}^{l_1} \sum_{m_2=1}^{l_2} \left( \sum_{i=1}^{N} \sum_{j=1}^{N} | \pi_{i,j}^{m_1,m_2} | \right)^2 \Bigg\}, \label{second}   
\end{align}
where $\tilde{\lambda}_{N^2}$ is the greatest eigenvalue of $\tilde{\Sigma}^{-1}$ and $(\pi_{1,k}^{m_k}, \ldots, \pi_{N,k}^{m_k})^{'} \; (k = 1,2)$ is defined in the proof of  Lemma~\ref{lem1}.
% by
% the solution of
% \[
% \tilde{\Sigma}_k 
% \begin{pmatrix}
%   \pi_{1}^{m_k}  \\
%   \vdots    \\
%   \pi_{N}^{m_k}
% \end{pmatrix}
% = 
% \begin{pmatrix}
%   \gamma_k (m_k)  \\
%   \vdots   \\
%   \gamma_k (m_k+N-1)
% \end{pmatrix}
% , \quad k=1,2.
% \]
From $\tilde{\Sigma} = \tilde{\Sigma}_1 \bigotimes  \tilde{\Sigma}_2$ and the property of the Kronecker product (see Horn and Johnson 1991; page 245), the eigenvalue of $\tilde{\Sigma}$ is $\lambda_{1,i}\lambda_{2,j} \; (i,j = 1,\ldots , N)$ where $\lambda_{1,i}$ and $\lambda_{2,j}$ are eigenvalues of $\tilde{\Sigma}_1$ and $\tilde{\Sigma}_2$ respectively. Because $\lambda_{1,i}$ and $\lambda_{2,j}$ are positive, 
\begin{align*}
\tilde{\lambda}_{N^2} = \Bigg\{ \left(\min_{i} \lambda_{1,i} \right) \left(\min_{j} \lambda_{2,j} \right) \Bigg\}^{-1} \le \frac{1}{ \left(\inf_{\lambda_1}g(\lambda_1) \right) \left(\inf_{\lambda_1}g(\lambda_1) \right)} < \infty,
\end{align*}
where the first inequality is derived by Proposition 4.5.3 of Brockwell and Davis (1991). By the boundedness of  $\tilde{\lambda}_{N^2}$, (\ref{second}) converges to 0 as $N \to \infty$ from (g), Lemma~\ref{lem1} and its proof.
%Since it is proved that $\tilde{\lambda}_{N^2}$ is bounded from the property of Kronecker product (see Horn and Johnson 1991) and Proposition 4.5.3 of Brockwell and Davis (1991), (\ref{second}) converges to 0 as $N \to \infty$ from (g), Lemma~\ref{lem1} and its proof. 
Therefore,
\begin{align*}
 \left( D_{N^2}^{-1} X^{'} \tilde{\Sigma}^{-1} \tilde{\Sigma}(l_1,l_2) \tilde{\Sigma}^{-1} X D_{N^2}^{-1} \right)_{m,n}
 \to \frac{1}{(2\pi)^2} \int_{\Pi^2} \frac{e^{-i(l_1 \lambda_1 + l_2 \lambda_2)}}{g_1(\lambda_1) g_2(\lambda_2)} d M_{mn}(\lambda_1,\lambda_2)
\end{align*}
as $N \to \infty$. In a similar way, we can show the three cases of ($l_1 \ge 0, l_2 < 0$), ($l_1 < 0, l_2 \ge 0$) and ($l_1 < 0, l_2 < 0$). Then,
\begin{align*}
\left( D_{N^2}^{-1} X^{'} \tilde{\Sigma}^{-1} \Sigma \tilde{\Sigma}^{-1} X D_{N^2}^{-1} \right)_{m,n} 
&= \sum_{l_1=-L_1}^{L_1} \sum_{l_2=-L_2}^{L_2} b_{l_1,l_2} \left( D_{N^2}^{-1} X^{'} \tilde{\Sigma}^{-1} \tilde{\Sigma}(l_1,l_2) \tilde{\Sigma}^{-1} X D_{N^2}^{-1} \right)_{m,n} \\
&\to \frac{1}{(2\pi)^2} \int_{\Pi^2} \frac{f(\lambda_1,\lambda_2)}{g^2(\lambda_1, \lambda_2)} d M_{mn}(\lambda_1,\lambda_2)
\end{align*}
as $N \to \infty$. Finally, consider the case of general $h(\lambda_1,\lambda_2)$. From (f), (h) and Lemma~\ref{lem2}, for any sufficiently small $\epsilon > 0$, there exist $h_L(\lambda_1,\lambda_2)$ and $h_U(\lambda_1,\lambda_2)$ such that
\begin{align*}
& h_L(\lambda_1,\lambda_2) = \sum_{k_1=-K_1}^{K_1} \sum_{k_2=-K_2}^{K_2} a_{k_1,k_2} e^{i(k_1 \lambda_1 + k_2 \lambda_2)}, \quad a_{k_1,k_2} = a_{-k_1,-k_2}, \\
& h_U(\lambda_1,\lambda_2) = \sum_{k_1=-K_1}^{K_1} \sum_{k_2=-K_2}^{K_2} b_{k_1,k_2} e^{i(k_1 \lambda_1 + k_2 \lambda_2)}, \quad b_{k_1,k_2} = b_{-k_1,-k_2}, \\
& h_L(\lambda_1,\lambda_2) \le h(\lambda_1,\lambda_2) \le h_U(\lambda_1,\lambda_2) 
 \intertext{and} 
& h_U(\lambda_1,\lambda_2) - h_L(\lambda_1,\lambda_2) \le \epsilon, \quad (\lambda_1,\lambda_2) \in [-\pi,\pi]^2.
\end{align*}
Then, for any $\gamma \in \mathbb{R}^p$,
\begin{align*}
\gamma^{'} D_{N^2}^{-1} X^{'} \tilde{\Sigma}^{-1} \Sigma_L \tilde{\Sigma}^{-1} X D_{N^2}^{-1} \gamma 
\le \gamma^{'} D_{N^2}^{-1} X^{'} \tilde{\Sigma}^{-1} \Sigma \tilde{\Sigma}^{-1} X D_{N^2}^{-1} \gamma 
\le \gamma^{'} D_{N^2}^{-1} X^{'} \tilde{\Sigma}^{-1} \Sigma_U \tilde{\Sigma}^{-1} X D_{N^2}^{-1} \gamma,
\end{align*}
where $\Sigma_L$ and $\Sigma_U$ are the covariance matrices with the spectral density functions\\ $h_L(\lambda_1,\lambda_2) g(\lambda_1,\lambda_2)$ and $h_U(\lambda_1,\lambda_2) g(\lambda_1,\lambda_2)$ respectively.
From the above discussion,
\begin{align*}
&\gamma^{'}  \frac{1}{(2\pi)^2} \int_{\Pi^2} \frac{h_L(\lambda_1,\lambda_2)}{g(\lambda_1, \lambda_2)} d M(\lambda_1,\lambda_2)  \gamma 
\le \varliminf_{N \to \infty} \gamma^{'}  D_{N^2}^{-1} X^{'} \tilde{\Sigma}^{-1} \Sigma \tilde{\Sigma}^{-1} X D_{N^2}^{-1}  \gamma \\
&\le \varlimsup_{N \to \infty} \gamma^{'}  D_{N^2}^{-1} X^{'} \tilde{\Sigma}^{-1} \Sigma \tilde{\Sigma}^{-1} X D_{N^2}^{-1}  \gamma 
\le \gamma^{'}  \frac{1}{(2\pi)^2} \int_{\Pi^2} \frac{h_U(\lambda_1,\lambda_2)}{g(\lambda_1, \lambda_2)} d M(\lambda_1,\lambda_2)  \gamma.
\end{align*}
Since $\epsilon$ is arbitrary, as $N \to \infty$,
\begin{align*}
\gamma^{'}  D_{N^2}^{-1} X^{'} \tilde{\Sigma}^{-1} \Sigma \tilde{\Sigma}^{-1} X D_{N^2}^{-1}  \gamma &\to \gamma^{'}  \frac{1}{(2\pi)^2} \int_{\Pi^2} \frac{h(\lambda_1,\lambda_2)}{g(\lambda_1, \lambda_2)} d M(\lambda_1,\lambda_2)  \gamma \\
& = \gamma^{'}  \frac{1}{(2\pi)^2} \int_{\Pi^2} \frac{f(\lambda_1,\lambda_2)}{g^2(\lambda_1, \lambda_2)} d M(\lambda_1,\lambda_2)  \gamma.
\end{align*}
Because this holds for every vector $\gamma$, as $N \to \infty$,
\[
D_{N^2}^{-1} X^{'} \tilde{\Sigma}^{-1} \Sigma \tilde{\Sigma}^{-1} X D_{N^2}^{-1} \to \frac{1}{(2\pi)^2} \int_{\Pi^2} \frac{f(\lambda_1,\lambda_2)}{g^2(\lambda_1, \lambda_2)} d M(\lambda_1,\lambda_2).
\]
The proof is completed.

\hfill $\Box$

%
%----------------------------------------------------------------------
% Acknowledgments
%---------------------------------------------------------------------- 
%
\noindent
\textbf{Acknowledgments} 
The author is grateful to Professor Ritei Shibata, Professor Yoshihiro Yajima and Professor Yasumasa Matsuda for their encouragement and helpful comments. I also acknowledge the suggestions from the editor and two anonymous referees that refined and improved the manuscript. All results remain the author's responsibility. 
%----------------------------------------------------------------------
% References
%---------------------------------------------------------------------- 
%

\end{document}